\def\BibTeX{{\rm B\kern-.05em{\sc i\kern-.025em b}\kern-.08emT\kern-.1667em\lower.7ex\hbox{E}\kern-.125emX}}
\newcommand{\rb}[2]{\raisebox{#1 mm}[0mm][0mm]{#2}}
\newcommand{\istrut}[2][0]{\rule[- #1 mm]{0mm}{#1 mm}\rule{0mm}{#2 mm}}
\newcommand{\hcm}[1][1]{\hspace*{#1 cm}}
\newcommand{\dist}{\operatorname{dist}}
\newcommand{\poly}{\operatorname{poly}}
\newtheorem*{remark}{Remark}
\begin{document}

%
\title{Lower Bounds on Sparse Spanners, Emulators, and Diameter-reducing Shortcuts}

%
\author{Shang-En Huang}
\email{sehuang@umich.edu}
\affiliation{%
  \institution{University of Michigan}
}
\author{Seth Pettie}
\email{pettie@umich.edu}
\affiliation{%
  \institution{University of Michigan}
}

%

%
\begin{abstract}
We prove better lower bounds on additive spanners and emulators,
which are lossy compression schemes for \emph{undirected} graphs, 
as well as lower bounds on \emph{shortcut sets}, which reduce the diameter of \emph{directed} graphs.
We prove that 
any $O(n)$-size shortcut set cannot bring the diameter
below $\Omega(n^{1/6})$, 
and that any $O(m)$-size 
shortcut set cannot bring it below $\Omega(n^{1/11})$.  
These improve Hesse's~\cite{Hesse03} lower bound of $\Omega(n^{1/17})$.
By combining these constructions with Abboud and Bodwin's~\cite{abboud_bodwin_2017} edge-splitting technique, we get additive stretch lower bounds of $+\Omega(n^{1/11})$ for $O(n)$-size spanners and $+\Omega(n^{1/18})$ for $O(n)$-size emulators.  
These improve Abboud and Bodwin's $+\Omega(n^{1/22})$ lower bounds for both spanners and emulators.
\end{abstract}

%
%
\begin{CCSXML}
<ccs2012>
<concept>
<concept_id>10003752.10003809.10003635.10010036</concept_id>
<concept_desc>Theory of computation~Sparsification and spanners</concept_desc>
<concept_significance>500</concept_significance>
</concept>
</ccs2012>
\end{CCSXML}

\ccsdesc[500]{Theory of computation~Sparsification and spanners}

%
\keywords{additive spanners, emulators, shortcutting directed graphs}

%

%
\maketitle

\section{Introduction}

A {\em spanner} of an undirected unweighted graph $G=(V,E)$ is a subgraph $H$ that approximates the distance function of $G$ up to some \emph{stretch}.  An {\em emulator} for $G$ is defined similarly, except that $H$ need not be a subgraph, and may contain \emph{weighted} edges.  In this paper we consider only \emph{additive} stretch functions:
\[
\dist_G(u,v) \leq \dist_H(u,v) \leq \dist_G(u,v) + \beta,
\]
where $\beta$ may depend on $n$.

Graph compression schemes (like spanners and emulators) are related to the problem of \emph{shortcutting} digraphs to reduce diameter, inasmuch as lower bounds for both objects are constructed using the same suite of techniques.  These lower bounds begin from the construction of graphs 
in which numerous pairs of vertices have shortest paths that are \emph{unique}, \emph{edge-disjoint}, and relatively \emph{long}.  
Such graphs were independently discovered by Alon~\cite{Alon01}, Hesse~\cite{Hesse03}, and Coppersmith and Elkin~\cite{CoppersmithE06}; see also~\cite{abboud_bodwin_2017,AbboudBP17}.  Given such a ``base graph,''
derived graphs can be obtained through a variety of graph products such as the {\em alternation} product discovered independently by Hesse~\cite{Hesse03} and Abboud and Bodwin~\cite{abboud_bodwin_2017}
and the \emph{substitution} product used by Abboud and Bodwin~\cite{abboud_bodwin_2017} and developed further by 
Abboud, Bodwin, and Pettie~\cite{AbboudBP17}.

In this paper we apply the techniques developed in~\cite{Alon01,Hesse03,CoppersmithE06,abboud_bodwin_2017,AbboudBP17} to
obtain better lower bounds on shortcutting sets, 
additive spanners, and additive emulators.

\paragraph*{Shortcutting Sets.} Let $G=(V,E)$ be a directed graph
and $G^*=(V,E^*)$ its transitive closure.  The \emph{diameter}
of a digraph $G$ is the maximum of $\dist_G(u,v)$ 
over all pairs $(u,v)\in E^*$.  Thorup~\cite{thorup1992shortcutting} conjectured that it 
is possible to reduce the diameter of any digraph to $\poly(\log n)$
by adding a set $E'\subseteq E^*$ of at most $m = |E|$ \emph{shortcuts},
i.e., $G'=(V,E\cup E')$ would have diameter $\poly(\log n)$.
This conjecture was confirmed for a couple special graph classes~\cite{thorup1992shortcutting,thorup1995shortcutting}, but refuted in general by Hesse~\cite{Hesse03}, who exhibited a 
graph with $m=\Theta(n^{19/17})$ edges and diameter $\Theta(n^{1/17})$ 
such that any diameter-reducing shortcutting requires 
$\Omega(mn^{1/17})$ shortcuts.  More generally, there exist graphs with $m=n^{1+\epsilon}$ edges and diameter $n^\delta$, $\delta=\delta(\epsilon)$, that require $\Omega(n^{2-\epsilon})$ shortcuts to make the diameter $o(n^{\delta})$; see Abboud, Bodwin, and Pettie~\cite[\S 6]{AbboudBP17} for an alternative proof of this result.

On the upper bound side, it is trivial to reduce the diameter
to $\tilde{O}(\sqrt{n})$ with $O(n)$ shortcuts or diameter $\tilde{O}(n/\sqrt{m})$ with $O(m)$ shortcuts.\footnote{Pick a set $S$ of $\sqrt{n}$ or $\sqrt{m}$ vertices uniformly at random, and include $S^2 \cap E^*$ as shortcuts.} Unfortunately, the trivial shortcutting schemes are not efficiently constructible in near-linear time.  In some applications of shortcuttings, efficiency of the construction is just as important as reducing the diameter.  
For example, a longstanding problem in parallel computing is to 
\emph{simultaneously} achieve time and work efficiency in computing reachability.\footnote{This is the notorious \emph{transitive closure bottleneck}.}  Very recently, 
Fineman~\cite{Fineman18} proved that an $\tilde{O}(n)$-size shortcut set can be 
computed in near-optimal work $\tilde{O}(m)$ (and $\tilde{O}(n^{2/3})$ parallel time) 
that reduces the diameter to $\tilde{O}(n^{2/3})$.

In this paper we prove that $O(n)$-size shortcut sets cannot reduce
the diameter below $\Omega(n^{1/6})$, and that $O(m)$-size shortcut sets cannot reduce it below $\Omega(n^{1/11})$.  See Table~\ref{tab:shortcut}.

\begin{table}
\centering
\begin{tabular}{l@{\hcm[.3]}|l@{\hcm[.3]}|@{\hcm[.3]}l@{\hcm[.3]}|@{\hcm[.3]}l}
\multicolumn{1}{l}{\bf Citation} &
\multicolumn{1}{@{\hcm[0]}l}{\bf Shortcut Set Size\ \ \ } &
\multicolumn{1}{@{\hcm[-.1]}l}{\bf Diameter} &
\multicolumn{1}{@{\hcm[-.1]}l}{\bf Computation Time}  \\\hline\hline
\rb{-3}{Folklore/trivial}    &   $O(n)$              & $\tilde{O}(\sqrt{n})$     & $O(m\sqrt{n})$\istrut[1.5]{4.5}\\\cline{2-4}
                    &   $O(m)$              & $\tilde{O}(n/\sqrt{m})$   & $O(m^{3/2})$\istrut[1.5]{4.5}\\\hline
Fineman~\cite{Fineman18}     &   $\tilde{O}(n)$      & $\tilde{O}(n^{2/3})$              & $\tilde{O}(m)$\istrut[1.5]{4.5}\\\hline\hline
Hesse~\cite{Hesse03}       &   $O(mn^{1/17})$      & $\Omega(n^{1/17})$        & ---\istrut[1.5]{4.5}\\\hline
\rb{-3}{\bf new}     &   $O(n)$              & $\Omega(n^{1/6})$         & ---\istrut[1.5]{4.5}\\\cline{2-4}
                    &   $O(m)$              & $\Omega(n^{1/11})$        & ---\istrut[1.5]{4.5}\\\hline\hline
\end{tabular}
\caption{\label{tab:shortcut}Upper and Lower bounds on shortcutting sets.  The lower bounds are existential, and independent of computation time.}
\end{table}

\paragraph*{Additive Spanners.} Additive spanners with constant stretches
were discovered by Aingworth, Checkuri, Indyk, and Motwani~\cite{AingworthCIM99} (see also~\cite{DorHZ00,ElkinP01,BKMP10,Knudsen13}),
Chechik~\cite{Chechik13},
and Baswana, Kavitha, Mehlhorn, and Pettie~\cite{BKMP10} (see also~\cite{woodruff_lower_2006,Knudsen13}). The sparsest of these~\cite{BKMP10}
has size $O(n^{4/3})$ and stretch $+6$.
Abboud and Bodwin~\cite{abboud_bodwin_2017} 
showed that the $4/3$ exponent could not be improved, in the sense that
any $+n^{o(1)}$ spanner has size $\Omega(n^{4/3-o(1)})$, and that
any $\Omega(n^{4/3-\epsilon})$-size spanner has additive stretch $+\Omega(n^\delta)$, $\delta=\delta(\epsilon)$.
On the upper bound side, Pettie~\cite{Pettie09} showed that $O(n)$-size spanners could have additive stretch $+\tilde{O}(n^{9/16})$, and
Bodwin and Williams~\cite{BodwinW16} improved this to $O(\sqrt{n})$ for $O(n)$-size spanners and $O(n^{3/7})$ for $O(n^{1+o(1)})$-size spanners.
Abboud and Bodwin~\cite{abboud_bodwin_2017} extended their lower bound
to $O(n)$-size spanners, showing that they require 
stretch $+\Omega(n^{1/22})$.   Using our lower bound for shortcuttings as a starting place, we improve \cite{abboud_bodwin_2017} by giving an $+\Omega(n^{1/11})$ stretch lower bound for $O(n)$-size spanners.
See Table~\ref{tab:spanner}.

\begin{table}
\centering
\begin{tabular}{l@{\hcm[.3]}|l@{\hcm[.3]}|l@{\hcm[.3]}|@{\hcm[.3]}l}
\multicolumn{1}{l}{\bf Citation} &
\multicolumn{1}{@{\hcm[0]}l}{\bf Spanner Size} &
\multicolumn{1}{@{\hcm[0]}l}{\bf Additive Stretch} &
\multicolumn{1}{@{\hcm[-.1]}l}{\bf Remarks}  \\\hline\hline
Aingworth, Chekuri,\istrut[0]{4.5}         &   \rb{-2.5}{$O(n^{3/2})$}    &   \rb{-2.5}{$2$}     & \rb{-2.5}{See also~\cite{DorHZ00,ElkinP01,BKMP10,Knudsen13}}\\
Indyk, and Mowani~\cite{AingworthCIM99}\istrut[1.5]{4}   &                   &           &\\\hline
Chechik~\cite{Chechik13}             &   $\tilde{O}(n^{7/5})$    & $4$   & \istrut[1.5]{4.5}\\\hline
Baswana, Kavitha,\istrut[0]{4.5}           &   \rb{-2.5}{$O(n^{4/3})$}    &   \rb{-2.5}{$6$}     & \rb{-2.5}{See also~\cite{woodruff_lower_2006,Knudsen13}}\\
Mehlhorn, and Pettie~\cite{BKMP10}\istrut[1.5]{4}    &           &           & \\\hline
Pettie~\cite{Pettie09}  & $O(n^{1+\epsilon})$   &   $O(n^{9/16 - 7\epsilon/8})$\ \ \ \  & $0 \le \epsilon$\istrut[1.5]{4.5}\\\hline
Chechik~\cite{Chechik13} & $O(n^{20/17+\epsilon})$ & $O(n^{4/17 - 3\epsilon/2})$ & $0 \le \epsilon$\istrut[1.5]{4.5}\\\hline
\rb{-2.5}{Bodwin and Williams~\cite{BodwinW15}} & \rb{-2.5}{$O(n^{1+\epsilon})$}    &   $O(n^{1/2 - \epsilon/2})$\istrut[0]{4.5} & \rb{-2.5}{$0 \le \epsilon$}\\
                            &                       & $O(n^{2/3-5\epsilon/3})$\istrut[1.5]{4}  &\\\hline
                             &                        &   $O(n^{3/7-\epsilon})$   & $0 \le \epsilon \le 6/49$  \istrut[1.5]{4.5}\\\cline{3-4}
Bodwin and Williams~\cite{BodwinW16}  &  $O(n^{1+o(1)+\epsilon})$    &   $O(n^{3/5-12\epsilon/5})$   &  $6/49 \le \epsilon \le 2/13$ \istrut[1.5]{4.5}\\\cline{3-4}
                            &                                & $O(n^{3/7 - 9\epsilon/7})$ & $2/13 \le \epsilon < 1/3$\istrut[1.5]{4.5}\\\hline\hline
\rb{-3}{Abboud and Bodwin~\cite{abboud_bodwin_2017}}  &   $O(n^{4/3-\epsilon})$   &   $\Omega(n^\delta)$          &   $\delta=\delta(\epsilon)$\istrut[1.5]{4.5}\\\cline{2-4}
                            &   $O(n)$                  &   $\Omega(n^{1/22})$  &  \istrut[1.5]{4.5} \\\hline
{\bf new}                   &   $O(n)$                  &   $\Omega(n^{1/11})$  &\istrut[1.5]{4.5}\\\hline\hline
\end{tabular}
\caption{\label{tab:spanner}Upper and lower bounds on additive spanners.}
\end{table}

\begin{table}
\centering
\begin{tabular}{l@{\hcm[.3]}|l@{\hcm[.3]}|@{\hcm[.3]}l@{\hcm[.3]}|@{\hcm[.3]}l}
\multicolumn{1}{l}{\bf Citation} &
\multicolumn{1}{@{\hcm[0]}l}{\bf Emulator Size} &
\multicolumn{1}{@{\hcm[-.1]}l}{\bf Additive Stretch} &
\multicolumn{1}{@{\hcm[-.1]}l}{\bf Remarks}  \\\hline\hline
Aingworth, Chekuri,\istrut[0]{4.5}         &   \rb{-2.5}{$O(n^{3/2})$}    &   \rb{-2.5}{$2$}     & \rb{-2.5}{See also~\cite{DorHZ00,ElkinP01,BKMP10,Knudsen13}}\\
Indyk, and Mowani~\cite{AingworthCIM99}\istrut[1.5]{4}   &                   &           &\\\hline
Dor, Halperin, and Zwick~\cite{DorHZ00} &   $O(n^{4/3})$    & $4$   & \istrut[1.5]{4.5}\\\hline
Baswana, Kavitha,\istrut[0]{4.5}  & \rb{-2.5}{$O(n^{1+\epsilon})$}   & \rb{-2.5}{$O(n^{1/2-3\epsilon/2})$} & \rb{-2.5}{(not claimed in~\cite{BKMP10})}\\
Mehlhorn and Pettie~\cite{BKMP10}\istrut[1.5]{4} &                  &                           &\\\hline
Bodwin and Williams~\cite{BodwinW15}     & $O(n^{1+\epsilon})$               & $O(n^{1/3-2\epsilon/3})$ &\istrut[0]{4.5}\\\hline
Bodwin and Williams~\cite{BodwinW16}     & $O(n^{1+o(1)+\epsilon})$               & $O(n^{3/11 - 9\epsilon/11})$ & (conseq. of \cite[Thm.~5]{BodwinW16}) \istrut[0]{4.5}\\\hline
Pettie~\cite{Pettie09}                  & $O(n^{1+\epsilon})$               & $\tilde{O}(n^{1/4-3\epsilon/4})$ & (not claimed in~\cite{Pettie09})\istrut[0]{4.5}\\\hline\hline
Abboud and Bodwin~\cite{abboud_bodwin_2017}       & $O(n)$                            & $\Omega(n^{1/22})$    &\istrut[0]{4.5}\\\hline
{\bf new}                       & $O(n)$                            & $\Omega(n^{1/18})$\istrut[0]{4.5}\\\hline\hline
\end{tabular}
\caption{\label{tab:emulator}Upper and lower bounds on additive emulators.  Emulators with sublinear additive stretch~\cite{ThorupZ06,HuangP17,AbboudBP17} are not shown.}
\end{table}

\paragraph*{Additive Emulators.} Dor, Halperin, and Zwick~\cite{DorHZ00}
were the first to explicitly define the notion of an \emph{emulator}, 
and gave a $+4$ emulator with size $O(n^{4/3})$.  Abboud and Bodwin's~\cite{abboud_bodwin_2017} lower bound applies to emulators, i.e., we cannot go below the $4/3$ threshold without incurring polynomial 
additive stretch.  Bodwin and Williams~\cite{BodwinW15,BodwinW16}
pointed out that some spanner construtions~\cite{BKMP10} imply emulator bounds, and gave new constructions of emulators with size
$O(n)$ and stretch $+O(n^{1/3})$, 
and with size $O(n^{1+o(1)})$ and stretch $+O(n^{3/11})$.\footnote{This last result is a consequence of \cite[Thm.~5]{BodwinW16} and the fact 
that any pair set $P\subset V^2$ has a pair-wise emulator with size $|P|$.}  Here we observe that Pettie's~\cite{Pettie09} $+\tilde{O}(n^{9/16})$ spanner, when turned into an 
$O(n)$-size emulator, has stretch $+\tilde{O}(n^{1/4})$,
which is slightly better than the linear size emulators found in~\cite{BKMP10,BodwinW15,BodwinW16}.  We improve Abboud and Bodwin's~\cite{abboud_bodwin_2017} lower bound and show that any $O(n)$-size emulator has additive stretch $+\Omega(n^{1/18})$.  See Table~\ref{tab:emulator}.

Our emulator lower bounds are polynomially 
weaker than the spanner lower bounds. 
Although 
neither bound is likely sharp, this difference 
reflects the rule that emulators are probably more powerful than spanners.  For example, 
at sparsity $O(n^{4/3})$, the best known emulators~\cite{DorHZ00} are slightly 
better than spanners~\cite{BKMP10}.
Below the 4/3 threshold the best \emph{sublinear additive}
emulators~\cite{ThorupZ06,HuangP17} have size $O(n^{1+\frac{1}{2^{k+1}-1}})$ and stretch function $d+O(d^{1-1/k})$.\footnote{I.e., vertices initially at 
distance $d$ are stretched to $d+O(d^{1-1/k})$.}
Abboud, Bodwin, and Pettie~\cite{AbboudBP17} showed that this
tradeoff is optimal for emulators, but the best known 
sublinear additive spanners~\cite{Pettie09,Chechik13} are polynomially worse.

\newcommand{\edited}[1]{{\color{blue}{#1}}}

There are a certain range of parameters where emulators are known to be polynomially sparser than spanners. 
For pairwise distance preservers, Bodwin~\cite{Bodwin17} showed that whenever $\omega(n^{1/2})=|P|=o(n^{2-o(1)})$, any pairwise distance preserver has an $\omega(n+|P|)$ lower bound, which is worse than the trivial 
distance preserving emulator with size $|P|$.  
A similar separation holds for 
source-wise distance preservers, where the goal is to exactly 
preserve distances between all vertex pairs in $S\subset V$.
A trivial source-wise
\emph{emulator} has size $|S|^2$, e.g., $O(n)$ for $|S|=\sqrt{n}$,
but source-wise \emph{spanners} with size $O(n)$ only exist 
for $|S|=O(n^{1/4})$~\cite{CoppersmithE06,Bodwin17}.

\paragraph*{Organization.} 
In Section~\ref{sect:shortcut} we present diameter lower bounds
for shortcut sets of size $O(n)$ and $O(m)$.  Section~\ref{sect:spanner-emulator} modifies the construction
to give lower bounds on additive spanners and additive emulators.
We conclude with some remarks in Section~\ref{sect:conclusion}.

\section{Lower Bounds on Shortcutting Digraphs}\label{sect:shortcut}

\subsection{Using \texorpdfstring{$O(n)$}{O(n)} Shortcuts}\label{subsection:using-n-shortcuts}

Existentially, the best known upper bound on $O(n)$-size shortcut sets 
is the trivial $\tilde{O}(\sqrt{n})$ bound.  Theorem~\ref{theorem:n-shortcuts}
shows that we cannot go below $\Omega(n^{1/6})$.

\begin{theorem}\label{theorem:n-shortcuts}
There exists a directed graph $G$ with $n$ vertices, such that for any shortcut set $E'$ with size $O(n)$, 
the graph $(V, E\cup E')$ has diameter $\Omega(n^{1/6})$.
\end{theorem}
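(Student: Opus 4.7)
The plan is to construct $G$ by taking a base digraph that is rich in unique, edge-disjoint, long shortest paths and then amplifying it via a Hesse-style alternation product, following the lineage of~\cite{Hesse03,abboud_bodwin_2017}. Start from a base graph $G_0$ on $n_0$ vertices containing a collection $P_0$ of critical pairs, each joined by a unique shortest path of length $L_0$, with these paths pairwise edge-disjoint; such graphs are supplied by the constructions of~\cite{Alon01,Hesse03,CoppersmithE06}. The precise parameters $(n_0, |P_0|, L_0)$ will be fixed at the end of the argument so as to optimize the exponent.

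Next, form the alternation product $G = G_0 \otimes G_0$ on vertex set $V_0 \times V_0$, with horizontal edges $(u,w)\to(v,w)$ whenever $(u,v)\in E_0$ and vertical edges $(u,w)\to(u,w')$ whenever $(w,w')\in E_0$. For any two base critical pairs $(s_1,t_1),(s_2,t_2)\in P_0$, the product pair $((s_1,s_2),(t_1,t_2))$ admits a unique shortest path in $G$, of length $2L_0$, forced to alternate horizontal and vertical base steps. This yields a graph on $n = n_0^2$ vertices and $|P_0|^2$ critical pairs; since path uniqueness is preserved by the product, any shortening of a critical pair in $G \cup E'$ must use shortcuts whose endpoints both lie on that pair's unique shortest path.

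The combinatorial heart of the argument is a \emph{shortcut-sharing} bound: a single shortcut $e'\in E'$ can be ``useful'' for only a small number of critical pairs. For a \emph{diagonal} shortcut $((a_1,a_2),(b_1,b_2))$ with $a_1\neq b_1$ and $a_2\neq b_2$, each coordinate projection must trace a subpath of a base critical path, and edge-disjointness in $G_0$ forces the base pair in each coordinate to be uniquely determined; hence each such shortcut serves at most one critical pair of $G$. Purely horizontal or purely vertical shortcuts can lie on many critical paths and must be handled more delicately, either by selecting a base graph with low vertex-path incidence or by splitting the shortcut budget by type and bounding each type's total contribution.

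With the sharing bound in hand, a counting argument concludes the proof. If $|E'| = O(n)$ brings the diameter of $G \cup E'$ down to $D$, every one of the $|P_0|^2$ critical pairs must have its unique path of length $2L_0$ compressed to length $\leq D$, demanding total shortcut ``coverage'' on the order of $|P_0|^2 \cdot (2L_0 - D)$; while the coverage supplied by the $O(n)$ shortcuts, after accounting for the sharing bound, is only $O(n \cdot L_0)$. Rearranging gives $D = \Omega(L_0)$ whenever $|P_0|^2$ is sufficiently larger than $n$, and tuning the base parameters so that $n_0^2 \approx n$, $|P_0|^2 \gg n$, and $L_0 \approx n^{1/6}$ produces the claimed $\Omega(n^{1/6})$ bound. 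The principal obstacle is exactly the accounting for non-diagonal shortcuts: it is this slack that determines the $1/6$ exponent, and any improvement would require either a base graph with stronger incidence properties or a more refined charging scheme.
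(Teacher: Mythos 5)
There is a genuine gap here, both structural and quantitative. First, the product you define---vertex set $V_0\times V_0$ with a horizontal edge for every $E_0$-edge in the first coordinate and a vertical edge for every $E_0$-edge in the second---does not force alternation and does not preserve uniqueness: from $(s_1,s_2)$ to $(t_1,t_2)$, \emph{every} interleaving of the two base critical paths is a shortest path, so there are exponentially many of them and the ``unique shortest path of length $2L_0$, forced to alternate'' claim on which your shortcut-sharing bound rests is false as stated. The actual alternation product of Hesse (used in Section~\ref{subsection:using-m-shortcuts} of the paper) interleaves \emph{layers}, so that even layers permit only first-coordinate moves and odd layers only second-coordinate moves; there uniqueness does hold (Lemma~\ref{lemma:uniquepath}), and the ``purely horizontal/vertical shortcut'' difficulty you explicitly defer is resolved because two critical paths share at most one edge, so any useful shortcut (its endpoints are at distance at least $2$) identifies its pair. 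As written, you leave that case open, which is itself a hole in the sharing bound your counting needs.

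Second, and more fundamentally, the parameter tuning cannot deliver $n^{1/6}$. Your counting requires $|P_0|^2=\Omega(n)=\Omega(n_0^2)$, i.e., a base graph with at least $n_0$ critical pairs whose unique shortest paths are pairwise edge-disjoint and have length $L_0\approx n^{1/6}=n_0^{1/3}$. The constructions of \cite{Alon01,Hesse03,CoppersmithE06} do not supply this: there (cf.\ Lemma~\ref{lemma:critical-pairs}), demanding $|P_0|\ge n_0$ forces $L_0=O(n_0^{1/6})$. Indeed, if a base graph with $|P_0|\ge n_0$ and $L_0\approx n_0^{1/3}$ existed, it would by itself---with no product at all---prove an $\Omega(n^{1/3})$ lower bound, well beyond the theorem being proved. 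Instantiating your scheme with the base graphs that actually exist gives only about $\Omega(n^{1/12})$ (or roughly $\Omega(n^{1/7})$ using the layered product with a degenerate second factor), because squaring the vertex count while merely squaring the pair count gains nothing in the $O(n)$-shortcut regime; the product is useful only when one needs $|P|\ge m$, as in the $O(m)$ bound. The paper's proof of this theorem uses no product: it takes the layered convex-hull lattice graph directly, with $D+1$ layers, $R=\Theta(rD)$, and $|P|=\Theta(R^d r^{d(d-1)/(d+1)})\ge n$ critical pairs whose unique paths are edge-disjoint, notes that each shortcut can help at most one pair (Lemma~\ref{lemma:shortcut-pair-relation}), so some pair stays at distance $D$, and then optimizes $d\in\{2,3\}$ to get $n=\Theta(D^6)$, i.e., diameter $\Omega(n^{1/6})$; the alternation product is reserved for Theorem~\ref{theorem:m-shortcuts}, where it is genuinely needed.
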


The remainder of Section~\ref{subsection:using-n-shortcuts} constitutes a proof of Theorem~\ref{theorem:n-shortcuts}.  We begin by defining the vertex set and edge set of $G$, and its \emph{critical pairs}.

\paragraph*{Vertices.}
The vertex set of $G$ is partitioned into $D+1$ \emph{layers} 
numbered $0$ through $D$.  
Define $B_d(\rho)$ to be the set of all lattice points
in $\mathbb{Z}^d$ within Euclidean distance $\rho$ of the origin.
In the calculations below we treat $d$ as a constant.
For each $k\in\{0,\ldots,D\}$, 
layer-$k$ vertices are identified with lattice points in
$B_d(R + kr)$, where $r,R$ are parameters of the construction.
A vertex can be represented by a pair $(a,k)$, where $a\in B_{d}(R+rk)$.
We want the size of all layers to be the same, up to a constant factor.  To that end we fix $R=drD$, so the total number of vertices is \begin{align*}
    n &\approx \eta_d R^d\left( 1^d + \left(1+\frac{r}{R}\right)^d +
    \cdots + \left(1+\frac{rD}{R}\right)^d\right)\\
    &=  \eta_d R^d\left(1^d + \left(1+\frac{1}{dD}\right)^d +
    \cdots + \left(1+\frac{1}{d}\right)^d\right) \;=\; \Theta \left(R^d D\right) & \mbox{(By definition of $R$)}
\end{align*}
where $\eta_d=\frac{1}{\sqrt{2\pi d}}\left(\frac{2\pi e}{d}\right)^{d/2}$ is the ratio of volume between a $d$-dimentional ball and its enclosing $d$-dimentional cube.

\newcommand{\ConvexHull}{\mathcal{V}}

\paragraph*{Edges.} Define $\ConvexHull_d(r)$ to be the set of all lattice
points at the corners of the convex hull of $B_d(r)$.
(This excludes points that happen to lie on the boundary, but in the interior
of one of its faces.)
We treat elements of $\ConvexHull_d(r)$ as vectors.
For each layer-$k$ vertex $(a,k)$, $k\in\{0,\ldots,D-1\}$, and each
vector $v\in\ConvexHull_d(r)$, we include a directed edge
$((a,k), (a+v,k+1))$.  All edges in $G$ are of this form.

\paragraph*{Critical Pairs.} The critical pair set is defined to be
\[
P = \{((a,0), (a+Dv,D)) \;|\; a\in B_d(R) \mbox{ and } v\in \ConvexHull_d(r)\}
\]
Each such pair has a corresponding path of length $D$, namely
$(a,0)\rightarrow (a+v,1) \rightarrow \cdots \rightarrow (a+Dv,D)$.
Lemma~\ref{lemma:critical-pairs} shows that this path is unique.
It was first proved by Hesse~\cite{Hesse03} and independently by
Coppersmith and Elkin~\cite{CoppersmithE06}.  (Both proofs are inspired by 
Behrend's~\cite{Behrend46} construction of arithmetic progression-free sets, which
uses $\ell_2$ balls rather than convex hulls.)

\begin{lemma}\label{lemma:critical-pairs} 
(cf.~\cite{Hesse03,CoppersmithE06})
The set of critical pairs $P$ have the following properties:
\begin{itemize}
    \item For all $(x, y)\in P$, there is a unique path from $x$ to $y$ in $G$.
    \item For any two distinct pairs $(x_1, y_1)$ and $(x_2, y_2)\in P$,
    their unique paths share no edge and at most one vertex.
    \item $|P|=\Theta(R^d r^{d\frac{d-1}{d+1}})$.
\end{itemize}
\end{lemma}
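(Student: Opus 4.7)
The plan is to dispatch the three bullets in turn. The first part (uniqueness) rests on the fact that every edge crosses exactly one layer, so any $(a,0)\to(a+Dv,D)$ path must use exactly $D$ step vectors $v_1,\dots,v_D\in\ConvexHull_d(r)$ satisfying $v_1+\cdots+v_D=Dv$. Rewriting this as $v=\frac{1}{D}\sum_{i=1}^D v_i$ realizes $v$ as a convex combination of points in $B_d(r)$, but $v\in\ConvexHull_d(r)$ is by definition an extreme point of $B_d(r)$'s convex hull. The standard characterization of extreme points then forces $v_i=v$ for all $i$, pinning down the path. This is the key structural step that Hesse and Coppersmith--Elkin exploit.

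For the second bullet, suppose two critical pairs $(x_1,y_1)=((a_1,0),(a_1+Dv_1,D))$ and $(x_2,y_2)=((a_2,0),(a_2+Dv_2,D))$ have unique paths (built from step vectors $v_1$ and $v_2$ respectively, by the previous bullet) that share two distinct vertices $(b,k)$ and $(b',k')$ with $k<k'$. Then along path $i$ we get $b'-b=(k'-k)v_i$, so $v_1=v_2$; and from $a_i+kv_i=b$ we read off $a_1=a_2$, contradicting distinctness. The same argument works if they share a single edge (which contributes two vertices). Hence distinct pairs share neither an edge nor two vertices.

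For the third bullet we factor $|P|=|B_d(R)|\cdot|\ConvexHull_d(r)|$. The first factor is $\Theta(R^d)$ by the volume estimate used earlier for $n$. The second factor is the number of vertices on the convex hull of lattice points inside a $d$-dimensional Euclidean ball, which by the classical theorem of Andrews is $\Theta(r^{d(d-1)/(d+1)})$. Multiplying gives the claimed $\Theta\bigl(R^d r^{d(d-1)/(d+1)}\bigr)$.

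The only genuinely nontrivial ingredient is the extreme-point argument in the first bullet, which is the reason for using $\ConvexHull_d(r)$ rather than all of $B_d(r)$; everything else is bookkeeping, modulo the invocation of Andrews' bound in the counting step. I would therefore organize the proof as a short paragraph per bullet, stating the extreme-point identity explicitly, then doing the vector arithmetic for the disjointness, then citing Andrews for the hull count.
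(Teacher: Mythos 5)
Your proof is correct and takes essentially the same route as the paper's: the extreme-point (strict-convexity) argument for path uniqueness, the observation that shared vertices determine both the start point and the step vector (hence the pair) for disjointness, and the factorization $|P|=|B_d(R)|\cdot|\mathcal{V}_d(r)|$ combined with the count of hull vertices of the lattice points in a ball. One small attribution nit: Andrews' theorem gives only the upper bound $O(r^{d(d-1)/(d+1)})$, while the matching lower bound for balls---the direction actually needed later to make $|P|$ large---is the B\'ar\'any--Larman result the paper cites.
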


\begin{proof}
For the first claim, 
let $x = (a, 0)$ and $v\in\ConvexHull_d(r)$ be the vector for which
    $y=(a+Dv, D)$. One path from $x$ to $y$ exists by construction.
    Let $\ConvexHull_d(r) = \{v_1, v_2, \ldots, v_s\}$. 
    Suppose there exists another path from $x$ to $y$.
    It must have length $D$ because all edges join consecutive layers.
    Every edge on this path corresponds to a vector $v_i$, which implies that 
    $Dv$ can be represented as a linear combination $k_1v_1+k_2v_2+\cdots + k_sv_s$, where $k_1+\cdots + k_s = D$ and $k_i\ge 0$. This implies that $v$ is a non-trivial convex combination of the vectors in $\ConvexHull_d(r)$, which contradicts 
    the fact that $\ConvexHull_d(r)$ is a strictly convex set.
    
The second claim follows from the fact that any \emph{edge} in the unique $x_1$-to-$y_1$ path uniquely identifies both $x_1$ and $y_1$.
    
For the last claim, we can express the number of critical pairs as $|P|=|B_d(R)|\cdot |\ConvexHull_d(r)|$. From B\'ar\'any and Larman~\cite{Barany98}, for any constant dimension $d$, we have $|\ConvexHull_d(r)|=\Theta(r^{d\frac{d-1}{d+1}})$.
\end{proof}

\begin{lemma}\label{lemma:shortcut-pair-relation}
Let $E'$ be a shortcut set for $G=(V, E)$. If the diameter of $G' = (V, E\cup E')$ 
is strictly less than $D$, then $|E'|\ge |P|$.
\end{lemma}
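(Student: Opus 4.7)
The plan is to construct an injection $f \colon P \hookrightarrow E'$, which immediately yields $|E'| \ge |P|$. For each critical pair $(x,y) \in P$, I fix any shortest $G'$-path $\pi_{x,y}$ from $x$ to $y$, and will charge $(x,y)$ to one carefully chosen shortcut edge $f(x,y) \in E' \cap \pi_{x,y}$.

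Every such shortest path $\pi_{x,y}$ must contain at least one edge of $E' \setminus E$. Indeed, every edge of $G$ increases the layer index by exactly one, so any walk from the layer-$0$ vertex $x$ to the layer-$D$ vertex $y$ using only edges of $E$ has length exactly $D$, whereas $\pi_{x,y}$ has length at most $D-1$ by the diameter hypothesis. This lets me set $f(x,y)$ to be, say, the first shortcut edge encountered along $\pi_{x,y}$.

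The heart of the proof is injectivity. Suppose for contradiction that $f(x_1,y_1) = f(x_2,y_2) = (u,w)$ for two distinct critical pairs. For each $i \in \{1,2\}$, expand $\pi_{x_i,y_i}$ into a walk $W_i$ in $G$ by replacing each shortcut edge $(u',w') \in E'$ with some $G$-path from $u'$ to $w'$ (which exists because $E' \subseteq E^*$). Now $G$ is a DAG---every edge strictly increases the layer index---so every walk in $G$ is automatically a simple path. Hence $W_i$ is a path from $x_i$ to $y_i$ in $G$, and by Lemma~\ref{lemma:critical-pairs} it must coincide with the unique such path. Since $(u,w)$ appears on $\pi_{x_i,y_i}$, both endpoints $u$ and $w$ lie on $W_i$, hence on the unique critical path for $(x_i,y_i)$. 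But then these two distinct critical paths share the two distinct vertices $u$ and $w$ (with $u \neq w$ because $E^*$ of a DAG contains no self-loops), contradicting the second clause of Lemma~\ref{lemma:critical-pairs}.

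The only subtle step is the observation that expanded walks in $G$ are automatically simple paths; without it, the walks $W_i$ could in principle revisit vertices, and one could not directly conclude that $u$ and $w$ lie on the unique critical path for $(x_i,y_i)$. Everything else---forcing a shortcut into each $\pi_{x,y}$ by a layer count, and extracting the contradiction from the at-most-one-shared-vertex property of critical paths---is essentially bookkeeping.
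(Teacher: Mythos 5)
Your proof is correct and follows essentially the same route as the paper's: charge each critical pair to a shortcut on its shortened path, then use the uniqueness of critical paths (and the fact that two of them share at most one vertex/no edge, Lemma~\ref{lemma:critical-pairs}) to show each shortcut can be charged at most once. Your version just spells out the expansion-to-a-$G$-path and layer-monotonicity details that the paper leaves implicit, and invokes the at-most-one-shared-vertex clause where the paper cites edge-disjointness; both are equivalent uses of the same lemma.
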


\begin{proof}
Every path in $G'$ corresponds to some path in $G$.  However, for pairs
in $P$, there is only one path in $G$, hence any shortcut in $E'$ useful
for a pair $(x,y)\in P$ must have both endpoints on the unique $x$-$y$ path in $G$.
By Lemma~\ref{lemma:critical-pairs}, two such paths for pairs in $P$ share
no common edges, hence each shortcut can only be useful for at most 
\underline{one} pair in $P$.  If $|E'| < |P|$ then some pair $(x,y)\in P$ must still be at distance $D$ in $G'$.
\end{proof}

\begin{proof}[Proof of Theorem~\ref{theorem:n-shortcuts}.]
By Lemma~\ref{lemma:shortcut-pair-relation}, if $|P| =\Omega(n)$, then any shortcut set that makes the diameter $<D$ has size $\Omega(n)$. In order to have $|P| = \Omega(n)$, it suffices to let $r^{d\frac{d-1}{d+1}}\ge D$. This implies $r\ge D^{\frac{d+1}{d(d-1)}}$.
From the construction, by fixing $d$ as a constant, we have
\[
n \;=\; \Theta(R^d D)
  \;=\; \Theta((rD)^d D)
  \;=\; \Omega(D^{1+d+\frac{d+1}{d-1}}). 
\]
Therefore, the diameter is $D = O\left(n^{1/\left(1+d+\frac{d+1}{d-1}\right)}\right)$. 
We can maximize $D=\Theta(n^{1/6})$ in one of two ways, by setting $d=2$,
$r = \Theta(n^{1/4})$, and $R=\Theta(n^{5/12})$, 
or $d=3$, $r=\Theta(n^{1/9})$, and $R=\Theta(n^{5/18})$.
In either case, the construction leads to a graph with very similar structure: 
the number of vertices in each layer is $\Theta(n^{5/6})$, and the out degrees
of each vertex are $\Theta(n^{1/6})$.
\end{proof}

Theorem~\ref{theorem:n-shortcuts} is indifferent between $d=2$ and $d=3$
but that is only because the size of the shortcut set is precisely $O(n)$.
When we allow it to be $O(n^{1+\epsilon})$, for $\epsilon>0$, 
there is generally one optimum dimension.
\begin{corollary}
Fix an $\epsilon \in [0,1)$ and let $d$ be an integer such that 
$\epsilon \in [0, \frac{d-1}{d+1}]$. 
There exists a directed graph $G$ with $n$ vertices, such that for any 
shortcut set $E'$ with $O(n^{1+\epsilon})$ shortcuts, the graph $(V, E\cup E')$ has diameter $\Omega(n^{\left(1-\frac{d+1}{d-1}\epsilon\right)/\left(1+d+\frac{d+1}{d-1}\right)})$. In particular, by setting $d=3$ the diameter lower bound becomes $\Omega(n^{\frac16-\frac13\epsilon})$.
\end{corollary}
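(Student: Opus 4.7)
The plan is to re-run the proof of Theorem~\ref{theorem:n-shortcuts} but with $R,r,D$ tuned so that the critical-pair count $|P|$ \emph{strictly exceeds} the allowed shortcut budget $n^{1+\epsilon}$. By Lemma~\ref{lemma:shortcut-pair-relation}, any shortcut set with fewer than $|P|$ edges leaves the diameter of $(V,E\cup E')$ at least $D$, so once the parameters are set it only remains to express $D$ as a function of $n$.

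First I would record the two scaling identities already established in the proof of Theorem~\ref{theorem:n-shortcuts}: using $R=drD$, we have
\[
n \;=\; \Theta(R^d D) \;=\; \Theta(r^d D^{d+1}),
\qquad
|P| \;=\; \Theta\bigl(R^d\, r^{d(d-1)/(d+1)}\bigr) \;=\; \Theta\bigl(r^{d+d(d-1)/(d+1)}\, D^d\bigr).
\]
Then I would impose $|P|\ge c\cdot n^{1+\epsilon}$ for a suitable constant $c$; after cancellation this becomes a condition of the form $r^{\alpha}\ge D^{\beta}$ with
\[
\alpha \;=\; \tfrac{d(d-1)}{d+1}-d\epsilon, \qquad \beta \;=\; 1+\epsilon(d+1).
\]
The hypothesis $\epsilon\in\bigl[0,\tfrac{d-1}{d+1}\bigr]$ is precisely what guarantees $\alpha\ge 0$, so the constraint is satisfiable.

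Next, I would set $r=\Theta(D^{\beta/\alpha})$, which is the smallest choice meeting the constraint (and hence the one that maximizes $D$ given $n$). Substituting into $n=\Theta(r^d D^{d+1})$ yields $n=\Theta(D^{d\beta/\alpha+d+1})$, so
\[
D \;=\; \Theta\!\left(n^{\alpha/(d\beta+(d+1)\alpha)}\right).
\]
A direct expansion gives the clean identity $d\beta+(d+1)\alpha=d+d(d-1)=d^{2}$, so the exponent collapses to $\alpha/d^2=\bigl(\tfrac{d-1}{d+1}-\epsilon\bigr)/d$, which equals $(1-\tfrac{d+1}{d-1}\epsilon)/(1+d+\tfrac{d+1}{d-1})$ after multiplying numerator and denominator by $(d-1)/(d+1)$. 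Specializing to $d=3$ recovers the stated exponent $\tfrac{1}{6}-\tfrac{1}{3}\epsilon$.

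There is no real obstacle beyond the exponent bookkeeping, which collapses neatly thanks to the identity $d\beta+(d+1)\alpha=d^{2}$. The only routine checks are that $r$ and $R$ stay at least constant (which holds for $D$ large enough, assuming $d$ is fixed and $\epsilon$ is bounded away from $\tfrac{d-1}{d+1}$ if we want $\alpha$ to be a fixed positive constant) and that all $\Theta(\cdot)$ constants depend only on $d$, so that the argument from Lemma~\ref{lemma:shortcut-pair-relation} applies verbatim to rule out any shortcut set of size $O(n^{1+\epsilon})$.
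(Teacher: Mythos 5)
Your proposal is correct and follows essentially the same route as the paper: tune $r,D$ (with $R=drD$) so that $|P|=\Theta\bigl(R^d r^{d(d-1)/(d+1)}\bigr)$ exceeds the $n^{1+\epsilon}$ shortcut budget, invoke Lemma~\ref{lemma:shortcut-pair-relation}, and solve for $D$ in terms of $n$; your condition $r^{\alpha}\ge D^{\beta}$ is exactly the paper's sufficient condition $r^{d(d-1)/(d+1)}\ge Dn^{\epsilon}$ rewritten via $n=\Theta(r^dD^{d+1})$, and the identity $d\beta+(d+1)\alpha=d^2$ reproduces the stated exponent. The only nit is a harmless slip in the final rescaling step: to pass from $\bigl(\tfrac{d-1}{d+1}-\epsilon\bigr)/d$ to the stated form you multiply numerator and denominator by $\tfrac{d+1}{d-1}$, not $\tfrac{d-1}{d+1}$.
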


\begin{proof}
In order to have $|P| > n^{1+\epsilon}$, it suffices to let $r^{d\frac{d-1}{d+1}} \ge Dn^\epsilon$. Hence, we have
\begin{align*}
    n^{1-{\frac{d+1}{d-1}\epsilon}} &= \Theta(R^dDn^{-\frac{d+1}{d-1}\epsilon})\\
     &= \Omega(r^d D^{1+d} n^{-\frac{d+1}{d-1}\epsilon})    & (R=\Theta(rD))\\
     &= \Omega(D^{1+d+\frac{d+1}{d-1}})             & (r^d \ge (Dn^\epsilon)^{\frac{d+1}{d-1}})
\end{align*}
\end{proof}

\subsection{Using \texorpdfstring{$O(m)$}{O(m)} Shortcuts}\label{subsection:using-m-shortcuts}

Let $G_{(d, r, D)}$ denote the layered graph constructed in Section~\ref{subsection:using-n-shortcuts} with parameters $d, D, r,$ and $R=drD$,
and let $P_G$ be its critical pair set. The total number of edges $m=\Theta(n|\ConvexHull_d(r)|)$ is always larger than $|P_G|=\Theta(\frac{n}{D} |\ConvexHull_d(r)|)$ by a factor of $D$. In order to get a lower bound for $O(m)$ shortcuts, we use a Cartesian product combining two such graphs layer by layer, forming a sparser graph.  This transformation
was discovered by Hesse~\cite{Hesse03} and rediscovered by Abboud and Bodwin~\cite{abboud_bodwin_2017}.

Let $G_1=G_{(d_1, r_1, D)}$ and $G_2=G_{(d_2, r_2, D)}$ be two graphs with the same number of layers, namely $D+1$. 
The product graph $G_1\otimes G_2$ is defined below.

\paragraph*{Vertices.}

The product graph has $2D+1$ vertex layers numbered $0, \ldots, 2D$. 
The vertex set of layer $i$ is $\{(x, y, i)\ |\ x\in B_{d_1}(R_1+\left\lceil\frac{i}{2}\right\rceil r_1), y\in B_{d_2}(R_2 + \left\lfloor\frac{i}{2}\right\rfloor r_2)\}$.
Since we set $R_j=d_jr_jD$, the total number of vertices is $\Theta\left(R_1^{d_1}R_2^{d_2}D\right)$.

\paragraph*{Edges.}
Let $(x,y,i)$ be a vertex in layer $i$.  If $i$ is even,
then for every vector $v\in \ConvexHull_{d_1}(r_1)$ we include
an edge $((x,y,i), (x+v,y,i+1))$.  If $i$ is odd,
then for every vector $w\in \ConvexHull_{d_2}(r_2)$, we include an edge $((x,y,i), (x,y+w,i+1))$.
The total number of edges in the product 
graph is then $\Theta\left(
R_1^{d_1}R_2^{d_2}D\left(r_1^{d_1\frac{d_1-1}{d_1+1}}+r_2^{d_2\frac{d_2-1}{d_2+1}}\right)
\right)$.

\paragraph*{Critical Pairs.}

By combining two graphs, we are able to construct a larger set of critical pairs, as follows.
\[
P = \{((a,b,0), (a+Dv, b+Dw, 2D)) \;|\; a\in B_{d_1}(R_1), b\in B_{d_2}(R_2), v\in \ConvexHull_{d_1}(r_1), w\in \ConvexHull_{d_2}(r_2)\}
\]
In other words, a pair in $P$ can be viewed as the product of two pairs
$((a,0),(a+Dv,D)) \in P_{G_1}$ and $((b,0),(b+Dw,D)) \in P_{G_2}$.

\begin{lemma}\label{lemma:uniquepath}
For any $a\in B_{d_1}(R_1)$, $b\in B_{d_2}(R_2)$, $v\in \ConvexHull_{d_1}(r_1)$, and $w\in \ConvexHull_{d_2}(r_2)$, there is a unique path from $(a, b, 0)$ to 
$(a+Dv, b+Dw, 2D)$.
\end{lemma}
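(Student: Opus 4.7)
The plan is to reduce the uniqueness question in the product graph to two independent instances of the original uniqueness argument from Lemma~\ref{lemma:critical-pairs}, one for each coordinate. The key structural observation is that every edge in $G_1\otimes G_2$ advances the layer index by exactly $1$ and modifies exactly one of the two coordinates (the first coordinate on even-to-odd edges, the second on odd-to-even edges). Therefore any path from $(a,b,0)$ to $(a+Dv, b+Dw, 2D)$ must consist of exactly $2D$ edges, of which exactly $D$ are ``$G_1$-type'' (one emanating from each even layer $0, 2, \ldots, 2D-2$) and exactly $D$ are ``$G_2$-type'' (one emanating from each odd layer $1, 3, \ldots, 2D-1$).

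Given such a path, let $v_1, \ldots, v_D \in \ConvexHull_{d_1}(r_1)$ denote the vectors associated with its $G_1$-type edges in order, and let $w_1, \ldots, w_D \in \ConvexHull_{d_2}(r_2)$ denote the vectors associated with its $G_2$-type edges. Because the two coordinates evolve independently, the endpoints force
\[
v_1 + v_2 + \cdots + v_D \;=\; Dv \qquad\text{and}\qquad w_1 + w_2 + \cdots + w_D \;=\; Dw.
\]
Next I would invoke the strict convexity argument from the proof of Lemma~\ref{lemma:critical-pairs}: since $\ConvexHull_{d_1}(r_1)$ is the set of corners of a strictly convex body, the only way to express the corner $v$ as a convex combination of elements of $\ConvexHull_{d_1}(r_1)$ is the trivial one, so $v_1 = \cdots = v_D = v$. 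The identical argument applied to $\ConvexHull_{d_2}(r_2)$ gives $w_1 = \cdots = w_D = w$.

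It remains to note that the order in which $G_1$-type and $G_2$-type edges appear along the path is also completely determined: they alternate in the fixed pattern dictated by the parities of the layer indices. Together with the uniqueness of the per-edge vectors, this pins down the entire sequence of edges, yielding the unique path
\[
(a,b,0) \to (a+v,b,1) \to (a+v,b+w,2) \to (a+2v,b+w,3) \to \cdots \to (a+Dv, b+Dw, 2D).
\]
I do not expect a genuine obstacle here; the only subtlety is making explicit that the two coordinates decouple because each edge touches only one of them, which is what allows the one-dimensional Behrend/Hesse/Coppersmith--Elkin argument to be applied to each coordinate in isolation.
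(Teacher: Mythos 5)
Your proof is correct and takes essentially the same route as the paper: you decompose any layer-$0$-to-layer-$2D$ path in $G_1\otimes G_2$ into its two coordinate projections (the $G_1$-type and $G_2$-type edges, whose positions are forced by layer parity) and apply the strict-convexity uniqueness argument of Lemma~\ref{lemma:critical-pairs} to each factor. The only cosmetic difference is that you unfold that convexity argument explicitly, whereas the paper simply cites Lemma~\ref{lemma:critical-pairs} for the two projected paths.
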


\begin{proof}
Every path in $G_1\otimes G_2$ from layer $0$ to layer $2D$ corresponds to two paths from layers 0 to $D$ in 
$G_1$ and $G_2$, respectively.
It follows from Lemma~\ref{lemma:critical-pairs} that
\[
(a,b,0)\rightarrow (a+v,b,1) \rightarrow (a+v,b+w,2) \rightarrow \cdots \rightarrow (a+Dv, b+Dw, 2D)
\]
is a unique path in $G_1\otimes G_2$.
\end{proof}

In $G_1\otimes G_2$ it is no longer true that 
pairs in $P$ have edge-disjoint paths. 
They may intersect at just one edge.

\begin{lemma}\label{lemma:intersection-two-paths-ii}
Consider two pairs $(x_1, y_1)$ and $(x_2, y_2)\in P$. Let $P_1$ and $P_2$ be the unique shortest paths in the combined graph from $x_1$ to $y_1$ and from $x_2$ to $y_2$. Then, $P_1\cap P_2$ contains at most one edge.
\end{lemma}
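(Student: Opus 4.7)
The plan is to show that having two shared edges between the unique paths overdetermines the parameters $(a_j, b_j, v_j, w_j)$ of the two critical pairs, forcing them to coincide and contradicting distinctness.

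First I would unfold the explicit form of the paths given by Lemma~\ref{lemma:uniquepath}: for the pair $((a_j, b_j, 0), (a_j + Dv_j, b_j + Dw_j, 2D))$, the unique path $P_j$ passes through $(a_j + kv_j,\, b_j + kw_j,\, 2k)$ at each even layer $2k$ and through $(a_j + (k+1)v_j,\, b_j + kw_j,\, 2k+1)$ at each odd layer $2k+1$. Thus each edge of $P_j$ is either a ``$G_1$-edge'' at some layer $2k \to 2k+1$, whose displacement vector is $v_j \in \ConvexHull_{d_1}(r_1)$ and which changes only the first coordinate, or a ``$G_2$-edge'' at some layer $2k+1 \to 2k+2$, whose displacement is $w_j \in \ConvexHull_{d_2}(r_2)$ and which changes only the second coordinate. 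Note in particular that each $P_j$ contains at most one edge per layer, so two distinct shared edges necessarily lie at two distinct layer transitions.

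Next I would observe what a single shared edge reveals about the parameters. If $P_1$ and $P_2$ share a $G_1$-edge at layer $2k \to 2k+1$, then the common displacement forces $v_1 = v_2$, the common tail forces $a_1 + kv_1 = a_2 + kv_2$ and hence $a_1 = a_2$, and the common second coordinate gives $b_1 + kw_1 = b_2 + kw_2$, i.e.\ $b_1 - b_2 = k(w_2 - w_1)$. Symmetrically, sharing a $G_2$-edge at layer $2k+1 \to 2k+2$ forces $w_1 = w_2$, $b_1 = b_2$, and $a_1 - a_2 = (k+1)(v_2 - v_1)$.

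Finally I would case-analyze two shared edges $e, e'$. If both are $G_1$-edges, at layer transitions $2k_1 \to 2k_1+1$ and $2k_2 \to 2k_2+1$ with $k_1 \neq k_2$, the two instances of the first derivation yield $v_1 = v_2$, $a_1 = a_2$, and $b_1 - b_2 = k_1(w_2 - w_1) = k_2(w_2 - w_1)$; since $k_1 \neq k_2$ this forces $w_1 = w_2$ and then $b_1 = b_2$, so the two critical pairs coincide. The case of two $G_2$-edges is symmetric. If $e$ is a $G_1$-edge and $e'$ is a $G_2$-edge, the two derivations directly produce $(a_1, v_1) = (a_2, v_2)$ and $(b_1, w_1) = (b_2, w_2)$, again forcing the pairs to coincide. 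In every subcase, two shared edges contradict distinctness of the pairs, so $|P_1 \cap P_2| \le 1$.

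The main obstacle is purely careful layer-indexing bookkeeping; the conceptual heart of the argument is the observation that a single shared edge already pins down one complete direction vector (either $v$ or $w$) along with its associated offset (either $a$ or $b$), leaving only the remaining pair of unknowns to be trapped by the second shared edge.
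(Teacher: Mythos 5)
Your proof is correct, and all the algebraic bookkeeping checks out, but it takes a slightly different route from the paper. The paper's proof of Lemma~\ref{lemma:intersection-two-paths-ii} is a one-liner: any two \emph{non-adjacent} vertices on the unique $x_1$-$y_1$ path already pin down $(x_1,y_1)$, because between any two layers at distance $\geq 2$ there is at least one $G_1$-step and at least one $G_2$-step, so the coordinate differences recover both $v$ and $w$, and then $a,b$ follow. If $P_1\cap P_2$ contained two edges it would contain three (or four) vertices, at least two of which are non-adjacent, forcing the two pairs to coincide; hence the intersection has at most two consecutive vertices and one edge. You instead work edge-by-edge: you classify each shared edge as a $G_1$-edge (which forces $v_1=v_2$, $a_1=a_2$, and $b_1-b_2=k(w_2-w_1)$) or a $G_2$-edge (symmetric), and then case-analyze the two shared edges to show the system is overdetermined. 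Both arguments exploit the same underlying rigidity. The paper's vertex-centric formulation is shorter because it avoids the three-way case split on edge types and never needs to solve the linear relation $b_1-b_2=k_j(w_2-w_1)$ for two different $k_j$; your edge-centric version is more mechanical and self-contained, making each step of parameter recovery explicit, which some readers may find easier to verify. One small remark: as stated, the lemma implicitly assumes $(x_1,y_1)\neq(x_2,y_2)$ (otherwise $P_1=P_2$ and the claim is trivially false); your proof, like the paper's, relies on this implicit distinctness assumption.
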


\begin{proof}
Any two \emph{non-adjacent} vertices on the unique $x_1$-$y_1$ path
uniquely identify $x_1$ and $y_1$.  Thus, two such paths can intersect in at most 2 (consecutive) vertices, and hence one edge.
\end{proof}

\begin{lemma}\label{lemma:shortcut-pair-relation-ii}
Let $E'$ be a shortcut set on $G=(V, E)$. If the diameter of $(V, E\cup E')$ 
is strictly less than $2D$, then $|E'|\ge |P|$.
\end{lemma}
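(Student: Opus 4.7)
The plan is to argue that for each critical pair $(x,y) \in P$ any $G'$-path of length less than $2D$ must use a dedicated shortcut that cannot simultaneously serve any other pair. The argument mimics Lemma~\ref{lemma:shortcut-pair-relation}, but must accommodate the weaker edge-disjointness guaranteed by Lemma~\ref{lemma:intersection-two-paths-ii}: two unique paths may now share one edge.

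First I would observe that every directed edge of $G$ advances the layer index by exactly one, so every shortcut $(u,v)\in E'\subseteq E^*$ has $v$ lying at least one layer above $u$. A shortcut is genuinely useful only when $v$ lies at least \emph{two} layers above $u$; call such shortcuts \emph{skipping}. Because $\dist_G(x,y) = 2D$ for every $(x,y)\in P$, any $G'$-path from $x$ to $y$ of length less than $2D$ must contain at least one skipping shortcut.

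Next, given such a $G'$-path, I would expand each shortcut $(u_i, v_i)$ along the path into an arbitrary $G$-walk from $u_i$ to $v_i$. Since each $G$-edge lifts the layer by exactly one, the resulting walk in $G$ from $x$ to $y$ has length exactly $2D$ and visits precisely one vertex per layer---hence it is in fact an $x$-$y$ path in $G$. By Lemma~\ref{lemma:uniquepath} this must be the unique $x$-$y$ path in $G$, so in particular both endpoints of every shortcut used on the $G'$-path lie on that unique path.

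Finally I would execute the charging argument. Suppose some skipping shortcut $(u,v) \in E'$ were useful for two distinct critical pairs $(x_1,y_1),(x_2,y_2) \in P$. Then $u$ and $v$ would be common vertices of both of the corresponding unique paths, and since $v$ sits at least two layers above $u$, they are \emph{not} consecutive on either path. However, Lemma~\ref{lemma:intersection-two-paths-ii} bounds the intersection of the two unique paths by a single edge---and hence by at most two \emph{consecutive} vertices---which cannot accommodate two non-consecutive vertices. Therefore each shortcut is useful for at most one critical pair, and since each of the $|P|$ pairs requires at least one skipping shortcut we conclude $|E'| \ge |P|$. The one mildly subtle point is ensuring that the walk-expansion step recovers the \emph{unique} $G$-path rather than merely some walk from $x$ to $y$; this is handled entirely by the strict layering, which forces any such walk to be a vertex-disjoint path of length $2D$.
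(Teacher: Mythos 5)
Your proof is correct and takes essentially the same route as the paper's: a useful shortcut must span at least two layers, layer-counting/expansion forces both its endpoints onto the unique critical path, and non-adjacency of those endpoints means each shortcut can serve at most one pair in $P$, giving $|E'|\ge|P|$. The only cosmetic remark is that you cite the statement of Lemma~\ref{lemma:intersection-two-paths-ii} (at most one shared edge) where what you really use is the fact established in its proof — that two non-adjacent vertices on a critical path determine the pair — but the paper's own proof of Lemma~\ref{lemma:shortcut-pair-relation-ii} makes exactly the same appeal, so this is fine.
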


\begin{proof}
Assume the diameter of $(V, E\cup E')$ is strictly less than $2D$. 
Every useful shortcut connects vertices that are at distance at least 2.
By Lemma~\ref{lemma:intersection-two-paths-ii}, such a shortcut can only be useful for
one pair in $P$. Thus, if the diameter of $(V,E\cup E')$ is less than $2D$, $|E'|\ge |P|$.
\end{proof}

By construction, the size of $|P|$ is  
\[
|P|=\Theta\left(R_1^{d_1}R_2^{d_2}|\ConvexHull_{d_1}(r_1)||\ConvexHull_{d_2}(r_2)|\right) = \Theta\left(R_1^{d_1}R_2^{d_2}r_1^{d_1\frac{d_1-1}{d_1+1}}r_2^{d_2\frac{d_2-1}{d_2+1}}\right). 
\]

\begin{theorem}\label{theorem:m-shortcuts}
There exists a directed graph $G$ with $n$ vertices and $m$ edges
such that for any shortcut set $E'$ with size $O(m)$, 
the graph $(V, E\cup E')$ has diameter $\Omega(n^{1/11})$.
\end{theorem}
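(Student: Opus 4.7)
The plan is to apply Lemma~\ref{lemma:shortcut-pair-relation-ii} to the product graph $G_1\otimes G_2$ and choose parameters so that the size of the critical pair set $|P|$ strictly exceeds $C m$ for any prescribed constant $C$. By the lemma this forces the diameter to remain at least $2D$ against any shortcut set of size $O(m)$, and then a balancing computation shows the best attainable $D$ is $\Theta(n^{1/11})$.

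First I would take $d_1=d_2=d$ and $r_1=r_2=r$ (so $R_1=R_2=R=drD$), which symmetrizes the formulas. Writing $a := r^{d(d-1)/(d+1)}$, the bounds from the construction give
\[
n=\Theta(R^{2d}D),\quad m=\Theta(n\cdot a),\quad |P|=\Theta\!\left(\tfrac{n}{D}\cdot a^{2}\right),
\]
so $|P|/m=\Theta(a/D)$. To make $|P|\ge Cm$ for any fixed $C$ it therefore suffices to set $a\ge C'D$ for a suitable constant $C'$, i.e.\ $r=\Theta\!\bigl(D^{(d+1)/(d(d-1))}\bigr)$ with a large enough leading constant. (Asymmetric choices $a_1\neq a_2$ can only hurt, since $\frac{a_1a_2}{a_1+a_2}$ is maximized, subject to the symmetric cost $m\propto a_1+a_2$, at $a_1=a_2$.)

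Substituting $r=\Theta\bigl(D^{(d+1)/(d(d-1))}\bigr)$ into $n=\Theta(r^{2d}D^{2d+1})$ gives
\[
n=\Theta\!\left(D^{\frac{2(d+1)}{d-1}+2d+1}\right),
\]
and the exponent equals $11$ for both $d=2$ and $d=3$. Thus $D=\Theta(n^{1/11})$, and by Lemma~\ref{lemma:shortcut-pair-relation-ii} no shortcut set of size $O(m)$ can bring the diameter below $2D=\Omega(n^{1/11})$.

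The main obstacle is really just bookkeeping: verifying that $|P|$ genuinely dominates $m$ by more than any constant factor (rather than only $|P|=\Omega(m)$), so that the hypothesis $|E'|=O(m)$ of the theorem forbids diameter $<2D$. This is handled by choosing the constant inside $r=\Theta(D^{(d+1)/(d(d-1))})$ sufficiently large, which makes $|P|/m$ arbitrarily large while preserving $n=\Theta(D^{11})$. Once this is pinned down, the two symmetric optima $(d_1,d_2)=(2,2)$ and $(3,3)$ both realize the $n^{1/11}$ bound, with concrete settings such as $d=2$, $r=\Theta(n^{3/22})$, $R=\Theta(n^{5/22})$.
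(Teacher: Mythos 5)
Your proposal is correct and takes essentially the same route as the paper: apply Lemma~\ref{lemma:shortcut-pair-relation-ii} to the product graph, force $|P|$ to dominate $m$ by taking $r^{d\frac{d-1}{d+1}}\ge \Theta(D)$, and balance parameters to get $n=\Theta(D^{11})$, hence $D=\Theta(n^{1/11})$. The only differences are cosmetic: you specialize to the symmetric case $d_1=d_2$, $r_1=r_2$ (the paper allows any $d_1,d_2\in\{2,3\}$, all yielding exponent $11$) and you track the leading constant so that $|P|$ exceeds $Cm$ for the prescribed constant, a point the paper handles more loosely as $|P|=\Omega(m)$.
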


\begin{proof}
If we set $|P|=\Omega(m)$, by Lemma~\ref{lemma:shortcut-pair-relation-ii}, any shortcut set $E'$ with $O(m)$ shortcuts has diameter $\Omega(D)$. In order to ensure $|P|=\Omega(m)$, it suffices to set $r_1^{d_1\frac{d_1-1}{d_1+1}} \ge r_2^{d_2\frac{d_2-1}{d_2+1}} \ge D$. Hence,
\begin{align*}
n &= \Theta(R_1^{d_1}R_2^{d_2}D) \\
&= \Theta\left(r_1^{d_1}r_2^{d_2}D^{d_1+d_2+1}\right) \tag{$R_j=d_jr_jD$}\\
&= \Omega\left(D^{\frac{d_1+1}{d_1-1}} D^{\frac{d_2+1}{d_2-1}} D^{d_1+d_2+1}\right) \tag{plugging in relation between $r_j$ and $d_j,D$}\\
&= \Omega\left(D^{\frac{d_1+1}{d_1-1}+\frac{d_2+1}{d_2-1} + d_1+d_2+1}\right)
\end{align*}
The exponent is minimized when $d_1$ and $d_2$ are either $2$ or $3$,
so we get $n=\Omega(D^{11})$ and hence $D=O(n^{1/11})$. 
In particular, by setting $d_1=d_2=2$
we have $D=\Theta(n^{1/11})$, 
$r_1=r_2=\Theta(n^{3/22})$ and 
$R_1=R_2=\Theta(n^{5/22})$, and by setting $d_1=d_2=3$
we have $D=\Theta(n^{1/11})$,
$r_1=r_2=\Theta(n^{2/33})$, and
$R_1=R_2=\Theta(n^{5/33})$.
\end{proof}

\section{Lower Bounds on Additive Spanners and Emulators}\label{sect:spanner-emulator}

We now establish better bounds on $O(n)$-size additive spanners
and emulators.  In Section~\ref{subsection:n-spanners}, we 
give an $+\Omega(n^{1/13})$ stretch lower bound on spanners. 
Using a different construction, we improve this in Section~\ref{subsection:improved-n-spanner} to $+\Omega(n^{1/11})$.
In Section~\ref{subsection:n-emulators} we show how to adapt
the $+\Omega(n^{1/13})$-spanner lower bound from Section~\ref{subsection:n-spanners} to prove that
$O(n)$-size emulators have stretch $+\Omega(n^{1/18})$.

Recall the definition of additive spanners and emulators.

\begin{definition}
Let $G=(V, E)$ be an unweighted undirected graph. 
A \emph{subgraph} $H=(V, E')$, $E'\subseteq E$, 
is said to be a \emph{spanner for $G$ with additive stretch $\beta$} 
if for any two vertices $u, v\in V$, 
\[
\mathrm{dist}_H(u, v) \le \mathrm{dist}_G(u, v) + \beta.
\]
A weighted graph $H=(V,E',w)$ is an \emph{emulator for $G$ with additive stretch $\beta$} if
\[
\mathrm{dist}_G(u, v) \le \mathrm{dist}_H(u, v) \le \mathrm{dist}_G(u, v) + \beta.
\]
Observe that we can assume w.l.o.g.~that if $(u,v)\in E'$ then
$w(u,v) = \mathrm{dist}_G(u,v)$.
\end{definition}

\subsection{\texorpdfstring{$O(n)$}{O(n)}-Size Spanners}\label{subsection:n-spanners}

By combining the technique of Abboud and Bodwin~\cite{abboud_bodwin_2017}
with the graphs constructed in Section~\ref{subsection:using-m-shortcuts},
we improve the $+\Omega(n^{1/22})$ lower bound of~\cite{abboud_bodwin_2017}
to $+\Omega(n^{1/13})$ for $O(n)$-size spanners.

\begin{theorem}\label{theorem:n-spanner}
There exists an undirected graph $G$ with $n$ vertices, 
such that any spanner for $G$ with $O(n)$ edges has $+\Omega(n^{1/13})$ additive stretch.
\end{theorem}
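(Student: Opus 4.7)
The plan is to apply Abboud--Bodwin's edge-splitting to the product digraph $G_1\otimes G_2$ of Section~\ref{subsection:using-m-shortcuts}. Fix $d_1=d_2=2$ with the parameter choices of Theorem~\ref{theorem:m-shortcuts}, so the base digraph has $n_0=\Theta(D^{11})$ vertices, $m_0=\Theta(D^{12})$ edges, and $|P|=\Theta(D^{12})$ critical pairs whose unique directed shortest paths have length $2D$ and pairwise share at most one edge (Lemma~\ref{lemma:intersection-two-paths-ii}). Now subdivide every directed edge into an undirected path of length $\ell=D$ by inserting $\ell-1$ new subdivision vertices per edge, and call the resulting undirected graph $G$. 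Then $G$ has $n=n_0+(\ell-1)m_0=\Theta(D^{13})$ vertices.

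First I would verify two structural properties. (i) Each critical pair $p=(x_p,y_p)$ still has a unique shortest path $\pi_p$ in the undirected graph $G$, now of length $2D\ell$, and every alternative $x_p$-to-$y_p$ path in $G$ has length at least $(2D+2)\ell$. This follows by projecting any undirected walk in $G$ down to an undirected walk in $G_1\otimes G_2$ (each traversed original edge contributes $\ell$ subdivision edges) and applying Lemma~\ref{lemma:uniquepath}: the strict convexity of $\mathcal{V}_{d_j}(r_j)$ makes the direct walk unique and forces any other walk to contain at least one extra up-edge paired with a down-edge, adding $2\ell$ subdivision edges. (ii) Each edge of $G$ lies on at most $s=O(D)$ critical pair paths: fix the edge to pin the relevant direction vector, then the only freedom is the shift in the orthogonal coordinate, giving $|\mathcal{V}_{d_2}|=\Theta(D)$ choices.

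Given these, the proof concludes with the standard incidence counting. If $H$ is a spanner with $|E(H)|\le cn$ for a sufficiently small constant $c$, call a pair $p\in P$ \emph{preserved} when $\pi_p\subseteq E(H)$; summing path sizes gives
\[
    (\text{preserved pairs})\cdot 2D\ell\;\le\;|E(H)|\cdot s,
\]
so the number of preserved pairs is $O(|E(H)|/\ell)$, which for small enough $c$ is strictly less than $|P|$. Some critical pair $p^{\ast}$ is therefore unpreserved, and every $x_{p^{\ast}}$-to-$y_{p^{\ast}}$ path in $H$ must either take a local girth-based detour around a missing subdivision edge (of cost $\Omega(\ell)$ above the original) or use a globally disjoint alternative of length $\ge(2D+2)\ell$. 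In either case $\mathrm{dist}_H(x_{p^{\ast}},y_{p^{\ast}})\ge 2D\ell+\Omega(\ell)$, giving additive stretch $+\Omega(\ell)=+\Omega(D)=+\Omega(n^{1/13})$.

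The main obstacle is the alternative-path length bound in the undirected subdivided graph. Lemma~\ref{lemma:uniquepath} gives uniqueness only for directed minimum-length paths, so one must carefully lift it to rule out cheap undirected walks in $G$; the key observation is that any walk with fewer than the ``right'' number of up- and down-edges fails the strict-convex-combination constraint on the direction vectors. A secondary technical point is to check that the sharing bound $s=O(D)$ persists under subdivision (immediate, since a subdivision edge inherits precisely the critical pair paths of its parent original edge) and that the permissible constant $c$ in the counting above is matched to the implicit constant in the hypothesis ``$O(n)$ edges'' via the fact that $|E(G)|/n$ is bounded away from $1$ for our choice of $\ell$ and $D$.
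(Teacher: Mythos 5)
Your approach is genuinely different from the paper's. The paper starts from $G_1\otimes G_2$ but then applies \emph{two} modifications: the edge subdivision step you describe, \emph{and} a clique-replacement step in which each interior vertex of degree $\delta_1+\delta_2$ is blown up into a complete bipartite graph $K_{\delta_1,\delta_2}$. The paper's Lemma~\ref{lemma:clique-edge-path-relation} then shows each \emph{clique edge} lies on at most one critical path, and the counting in Lemma~\ref{lemma:spanner-construction-lowerbound} is done purely over clique edges. You drop the clique step entirely and replace the clique-edge counting with an averaging bound ``$(\text{preserved})\cdot 2D\ell\le |E(H)|\cdot s$'' over subdivision edges, using that each original edge is on at most $s=\Theta(D)$ critical paths. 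Your structural claims are correct as far as they go: in the subdivided graph every alternative $x$-$y$ walk projects to a $G_0$-walk with at least $2D+2$ original edges by layering/parity, giving $+\Omega(\ell)$ stretch, and the sharing bound $s=\Theta(\delta_2)$ is right. (Your ``local girth-based detour'' case is actually vacuous in your graph, since subdivision vertices have degree $2$; only the global detour exists, but the conclusion is the same.)

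The gap is what you labeled a ``secondary technical point,'' and it does not go away by ``matching constants.'' Your construction has $m=\ell m_0=\Theta(D^{13})=\Theta(n)$ \emph{total} edges; after subdividing every edge $\ell$ times one always has $m/n\to 1$. So the hypothesis ``$H$ has at most $c_0 n$ edges'' is vacuous once $c_0$ exceeds the absolute constant $m/n$, and then $H$ may be $G$ itself with zero stretch. Tuning $\ell$, $D$, or $\delta_j$ doesn't help: one can check that the ratio $(\text{preserved upper bound})/|P|$ is $\Theta(c_0)$ with an absolute hidden constant, independent of all these parameters, so your argument proves the theorem only for $c_0$ below a fixed threshold, whereas the statement (and the paper's remark preceding Theorem~\ref{theorem:n-spanner-harder}) requires it for every constant. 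The clique-replacement step is precisely what repairs this: it adds $\Theta(\delta_1\delta_2)$ edges but only $\Theta(\delta_1+\delta_2)$ vertices per interior node, so taking $\delta_1=\delta_2=KD$ drives $m/n$ (and $D|P|/n$) up to $\Theta(K)$, and moreover clique edges are uniquely assigned to critical paths, which is what makes the paper's pigeonhole over clique edges scale with $K$. Without that step, the Abboud--Bodwin ``edge-splitting'' you invoke is only half-applied, and the theorem as stated is not established.
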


In this section we regard $G_{(d, r, D)}$ to be an \emph{undirected} graph.
We begin with the undirected graph 
$G_0 = G_{(d_1, r_1, D)}\otimes G_{(d_2, r_2, D)}$,
then modify it in the \emph{edge subdivision step} 
and the \emph{clique replacement step} to obtain $G$.

\paragraph*{The Edge Subdivision Step.}

Every edge in $G_0$ is subdivided into $D$ edges, yielding $G_E$. 
This step makes the graph very sparse 
since most of the vertices in $G_E$ now
have degree 2.

\paragraph*{The Clique Replacement Step.}

Consider a vertex $u$ in $G_E$ that comes from one of the 
interior layers of $G_0$, i.e., layers $1,\ldots,2D-1$, not 0 or $2D$.
Note that $u$ has degree $\delta_1+\delta_2$,
with $\delta_1=\Theta\left(r_1^{d_1\frac{d_1-1}{d_1+1}}\right)$
edges leading to the preceding layer
and $\delta_2=\Theta\left(r_2^{d_2\frac{d_2-1}{d_2+1}}\right)$
edges leading to the following layer (or vice versa). 
We replace each such $u$ with a complete bipartite clique $K_{\delta_1, \delta_2}$,
where each clique vertex becomes attached to one non-clique edge formerly attached to $u$.
The final graph is denoted $G$.

\paragraph*{Critical Pairs.}

The set $P$ of \emph{critical pairs} for $G$ is 
identical to the set of critical pairs for $G_0$.
For each $(x,y)\in P$, the unique $x$-$y$ path in $G$ is called a \emph{critical path}.
From the construction, the number of vertices, edges,
and critical pairs in $G$ is
\begin{align}
    n &=\Theta\left( R_1^{d_1}R_2^{d_2} D^2(\delta_1+\delta_2) \right).\label{eq:spanner-construction-n}\\
    m &=\Theta\left( R_1^{d_1} R_2^{d_2} D (D\delta_1+D\delta_2+\delta_1\delta_2) \right).\label{eq:spanner-construction-m}\\
    |P| &=\Theta\left( R_1^{d_1}R_2^{d_2} \delta_1\delta_2 \right).\label{eq:spanner-construction-p}
\end{align}

Lemma~\ref{lemma:clique-edge-path-relation} is 
key to relating the size of the spanner with the pair set $P$.

\begin{lemma}\label{lemma:clique-edge-path-relation}
Every clique edge belongs to at most one critical path.
\end{lemma}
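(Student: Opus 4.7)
The plan is to reduce the question about clique edges in $G$ to the edge-intersection statement already proved for the product graph $G_0 = G_{(d_1,r_1,D)}\otimes G_{(d_2,r_2,D)}$, namely Lemma~\ref{lemma:intersection-two-paths-ii}. The key observation is that a clique edge encodes strictly more information than a single edge of $G_0$: it records a pair of consecutive $G_0$-edges.

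First I would make the correspondence between edges of $G$ and edges of $G_0$ explicit. Each subdivision edge in $G_E$ is a copy of a unique $G_0$-edge, and each clique edge in $G$ lies inside the bipartite clique $K_{\delta_1,\delta_2}$ replacing some interior vertex $u$ of $G_E$ (equivalently, an interior-layer vertex of $G_0$). By construction, one side of this clique corresponds to the $\delta_i$ edges of $G_0$ entering $u$ from the previous layer and the other side corresponds to the $\delta_{3-i}$ edges leaving $u$ toward the next layer. Thus a clique edge $e$ at $u$ uniquely determines an ordered pair $(e_{\mathrm{in}},e_{\mathrm{out}})$ of $G_0$-edges, both incident to $u$ and lying on opposite sides of $u$.

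Next I would show that any critical path in $G$ traversing $e$ projects to a path in $G_0$ that uses both $e_{\mathrm{in}}$ and $e_{\mathrm{out}}$. This follows because a path in $G$ entering a clique from the subdivision path attached to $e_{\mathrm{in}}$ and leaving along the subdivision path attached to $e_{\mathrm{out}}$ corresponds, after collapsing subdivisions and cliques, to a $G_0$-path that uses the two consecutive edges $e_{\mathrm{in}},e_{\mathrm{out}}$ at $u$. In particular the projected $G_0$-path contains the three consecutive vertices on $e_{\mathrm{in}}\cup e_{\mathrm{out}}$.

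Now suppose for contradiction that two distinct critical paths in $G$ both use the clique edge $e$. Their projections to $G_0$ are the unique $G_0$-paths for two distinct critical pairs, and by the previous paragraph both projections contain the same two consecutive edges $e_{\mathrm{in}},e_{\mathrm{out}}$. But Lemma~\ref{lemma:intersection-two-paths-ii} states that two distinct critical paths in $G_0$ share at most one edge (equivalently, at most two consecutive vertices), contradicting the fact that they share two consecutive edges (three consecutive vertices). The main (and only) obstacle is being careful about the projection step — in particular, verifying that traversing the clique $K_{\delta_1,\delta_2}$ on a shortest path must enter on one side and leave on the other, so that the clique edge indeed forces the pair $(e_{\mathrm{in}},e_{\mathrm{out}})$ of consecutive $G_0$-edges rather than something weaker.
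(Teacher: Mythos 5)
Your proof is correct and rests on the same key observation as the paper's: a clique edge encodes the pair of vectors $(v,w)$, i.e.\ the pair of consecutive $G_0$-edges $(e_{\mathrm{in}},e_{\mathrm{out}})$ at that vertex. The only cosmetic difference is that you close the argument by invoking Lemma~\ref{lemma:intersection-two-paths-ii} (two critical paths in $G_0$ share at most one edge), whereas the paper concludes directly that $(v,w)$, together with the clique's location, identifies exactly one critical pair; both steps are valid and essentially equivalent.
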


\begin{proof}
Every clique has $\delta_1$ vertices on one side and $\delta_2$ vertices on the other side.
Each vertex on the $\delta_1$ side corresponds to a vector $v\in\ConvexHull_{d_1}(r_1)$
and each vertex on the $\delta_2$ side corresponds to a vector $w\in\ConvexHull_{d_2}(r_2)$. 
Each clique edge uniquely determines a pair of vectors $(v, w)$, and 
hence exactly one critical pair in $P$.
\end{proof}

\begin{lemma}\label{lemma:spanner-construction-lowerbound}
Every spanner of $G$ with additive stretch $+(2D-1)$
must contain at least $D|P|$ clique edges.
\end{lemma}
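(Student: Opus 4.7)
The plan is to show that, for each critical pair $(x,y)\in P$, at least $D$ of the $2D-1$ clique edges on the unique critical path from $x$ to $y$ must appear in any spanner $H$ of additive stretch $+(2D-1)$. Since Lemma~\ref{lemma:clique-edge-path-relation} ensures these surviving critical clique edges are distinct across different critical pairs, summing over $P$ gives the claimed bound of $D|P|$ clique edges in $H$.

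To begin, I would record that the critical path in $G$ has length $L = 2D \cdot D + (2D-1) = 2D^2 + 2D - 1$: it consists of $2D$ subdivision paths (each of length $D$) interleaved with the $2D-1$ critical clique edges at interior vertices. Let $W_{xy}$ be a shortest $x$-to-$y$ path in $H$; by the stretch assumption $|W_{xy}| \le L + (2D-1) = 2D^2 + 4D - 2$. The key structural step is to show that $W_{xy}$ must traverse exactly the $2D$ subdivision paths comprising the critical path, rather than using a different sequence of $G_0$-edges. Every $x$-to-$y$ path in $G$ projects (by contracting each subdivision path back to its $G_0$-edge) to an $x$-to-$y$ walk in $G_0$; by Lemma~\ref{lemma:uniquepath} and the layer structure of $G_0$, any projected walk other than the critical path has length at least $2D + 2$. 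A careful accounting then shows the corresponding $G$-path has length at least $D(2D+2) + (2D-1) = 2D^2 + 4D - 1$: the $2D+2$ subdivision-path traversals contribute at least $D(2D+2)$ subdivision edges, and at least $2D - 1$ clique-edge traversals are unavoidable (a walk of length $2D+2$ has $2D+1$ intermediate $G_0$-vertex visits, but only the two direction-change visits can be crossed at zero clique-edge cost; the remaining $2D-1$ through-visits each cost at least one clique edge). This exceeds the budget by one, so $W_{xy}$ must correspond to the critical path itself, and in particular $H$ contains every subdivision edge on it.

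Given this, $W_{xy}$ uses at least $2D^2$ subdivision edges, leaving at most $4D - 2$ clique edges. At each of the $2D-1$ critical-path cliques, $W_{xy}$ is forced to enter at the specific side-$1$ vertex and exit at the specific side-$2$ vertex prescribed by the incident subdivision paths. If the critical clique edge linking these two vertices is in $H$, crossing the clique costs one edge; otherwise the shortest alternative in $K_{\delta_1,\delta_2}$ avoiding this edge has length $3$ (going side-$1$ $\to$ side-$2$ $\to$ side-$1$ $\to$ side-$2$). Letting $k$ be the number of cliques whose critical clique edge is missing from $H$, the total clique-edge count in $W_{xy}$ is at least $(2D-1-k)\cdot 1 + k\cdot 3 = 2D - 1 + 2k$, and $2D - 1 + 2k \le 4D - 2$ forces $k \le D - 1$. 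Hence at least $D$ critical clique edges per critical pair survive in $H$, and Lemma~\ref{lemma:clique-edge-path-relation} immediately yields the claim.

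I expect the main obstacle to be the structural step ruling out alternative projected walks: the argument must delicately balance extra subdivision-path traversals (each costing $D$) against potential savings in clique-edge traversals at backtrack vertices, where same-side entry and exit through the same clique vertex bypass clique edges entirely. The tight inequality is that one backtrack in $G_0$ adds $+2D$ to the subdivision-edge cost while saving at most $0$ clique edges, exceeding the stretch budget of $2D-1$ by exactly one.
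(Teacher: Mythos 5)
Your proof is correct and takes essentially the same route as the paper's: project the $H$-shortest path for a critical pair back to a $G_0$-walk, argue that the projection must be the unique critical $G_0$-path because any longer projected walk already blows the $+(2D-1)$ budget via extra subdivided edges, then charge $+2$ per missing critical clique edge. The only organizational difference is that the paper runs a contradiction argument, using pigeonhole to extract one pair with $\geq D$ missing critical clique edges and then showing that pair is stretched by $+2D$; you instead bound every pair directly (at most $D-1$ missing critical clique edges per pair) and sum using Lemma~\ref{lemma:clique-edge-path-relation}. These are the same estimate packaged in contrapositive forms, so no new idea is being added or lost.

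One minor inaccuracy in your accounting of the structural step: at an \emph{interior-layer} clique, a direction-change visit (entering and leaving via subdivision paths to the same adjacent layer) enters and exits at two \emph{distinct same-side} vertices of the bipartite clique $K_{\delta_1,\delta_2}$, and any such traversal uses at least $2$ clique edges, not $0$; only a turnaround at a layer-$0$ or layer-$2D$ vertex (which is not replaced by a clique) is free. Since you only invoked this as a lower bound, the conclusion $|W_{xy}| \geq D(2D+2) + (2D-1)$ still holds — the true bound is in fact a bit larger — so the argument goes through; it is just the stated justification for the ``zero cost'' that is off.
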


\begin{proof}
For the sake of contradiction suppose there exists a spanner $H$ containing at most $D|P|-1$ clique edges.
By the pigeonhole principle there exists a pair $(x, y)\in P$ such that at least 
$D$ clique edges are \emph{missing} in $H$.

Let $P_{(x, y)}$ be the unique shortest path from $x$ to $y$ in $G$, 
and let $P'_{(x, y)}$ be a shortest path from $x$ to $y$ in $H$. 
Since $G_0$ is formed from $G$ by contracting all bipartite cliques 
and replacing subdivided edges with single edges, we can 
apply the same operations on $P'_{(x, y)}$ to get a path 
$P''_{(x, y)}$ in $G_0$. We now consider two cases:
\begin{itemize}
    \item If $P''_{(x, y)}$ is the unique shortest path from $x$ to $y$ in $G_0$, then
    $P'_{(x, y)}$ suffers at least a $+2$ stretch on each of the $D$ missing clique edges, 
    so $|P'_{(x, y)}| \ge |P_{(x, y)}| + 2D$.
    \item If $P''_{(x, y)}$ is not the unique shortest path from $x$ to $y$ in $G_0$, then it
    must traverse at least two more edges than the shortest $x$-$y$ path in $G_0$ (because $G_0$ is bipartite), each of which is subdivided $D$ times in the formation of $G$.  Thus
    $|P'_{(x, y)}| \ge |P_{(x, y)}| + 2D$.
\end{itemize}
In either case, $P'_{(x, y)}$ has at least $+2D$ additive stretch and 
$H$ cannot be a $+(2D-1)$ spanner.
\end{proof}

\begin{proof}[Proof of Theorem~\ref{theorem:n-spanner}]
The goal is to have parameters set up so that $D|P| = \Omega(n)$,
so that we can apply Lemma~\ref{lemma:spanner-construction-lowerbound}.
Without loss of generality $\delta_1 \ge \delta_2$. 
By comparing \eqref{eq:spanner-construction-n} with \eqref{eq:spanner-construction-p}, 
it suffices to set $\delta_1\ge \delta_2 \ge D$.  We can express the number of vertices
in terms of $D$ as follows:
\begin{align*}
    n &= \Theta\left(R_1^{d_1}R_2^{d_2}D^2\delta_1\right)\\
    &=\Omega\left((r_1D)^{d_1}(r_2D)^{d_2}D^3\right) \tag{$\delta_1\ge \delta_2\ge D$}\\
    &=\Omega\left(\left(\delta_1^{\frac{d_1+1}{d_1(d_1-1)}}D\right)^{d_1}
    \left(\delta_2^{\frac{d_2+1}{d_2(d_2-1)}}D\right)^{d_2} D^3\right) \tag{by definition of $\delta_1$ and $\delta_2$}\\
    &=\Omega\left(D^{\frac{d_1+1}{d_1-1}+d_1+\frac{d_2+1}{d_2-1}+d_2+3}\right) \tag{$\delta_1\ge \delta_2\ge D$}
\end{align*}

The exponent is minimized when $d_1$ and $d_2$ are either $2$ or $3$,
so $n=\Omega(D^{13})$ and hence the additive stretch is 
$D=O(n^{1/13})$. 
When $d_1=d_2=2$ we have 
$D=\Theta(n^{1/13})$, 
$r=\Theta(n^{3/26})$ and 
$R=\Theta(n^{5/26})$, and when $d_1=d_2=3$ we have
$D=\Theta(n^{1/13})$,
$r=\Theta(n^{2/39})$, and
$R=\Theta(n^{5/39})$.
\end{proof}

\begin{corollary}\label{cor:spanner-lb}
Fix an $\epsilon\in [0,1/3)$ and let $d$ be an integer such that $\epsilon \in \left[0, \frac{d-1}{3d+1}\right]$. There exists a graph $G$ with $n$ vertices such that any spanner $H\subseteq G$ 
with $O(n^{1+\epsilon})$ edges has additive stretch $+\Omega\left(n^{\left(1-\frac{3d+1}{d-1}\epsilon\right)/\left(3+2d+2\frac{d+1}{d-1}\right)}\right)$. In particular, by setting $d=3$ the additive stretch becomes $\Omega(n^{\frac{1}{13}-\frac{5}{13}\epsilon})$.
\end{corollary}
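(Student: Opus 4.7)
The plan is to repeat the construction of Theorem~\ref{theorem:n-spanner} with a relaxed edge budget. Lemma~\ref{lemma:spanner-construction-lowerbound} states that any $+(2D-1)$ spanner of $G$ must retain at least $D|P|$ clique edges, and that lemma is insensitive to the overall sparsity target. So to rule out spanners of size $O(n^{1+\epsilon})$ with additive stretch strictly less than $+2D$, it suffices to choose the construction parameters so that $D|P| = \Omega(n^{1+\epsilon})$ (with a sufficiently large hidden constant) rather than $\Omega(n)$ as in the $\epsilon=0$ case.

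First I would take the same symmetric product $G_0 = G_{(d,r,D)}\otimes G_{(d,r,D)}$ used in Theorem~\ref{theorem:n-spanner}, apply the edge subdivision and clique replacement steps of Section~\ref{subsection:n-spanners}, and write $\delta := r^{d(d-1)/(d+1)}$ so that the common value $\delta_1=\delta_2 = \delta$ appears cleanly in the counts. From \eqref{eq:spanner-construction-n} and \eqref{eq:spanner-construction-p} one reads off the ratio $D|P|/n = \Theta(\delta/D)$, so I would set $\delta = \Theta(Dn^{\epsilon})$, equivalently $r = \Theta\bigl((Dn^{\epsilon})^{(d+1)/(d(d-1))}\bigr)$. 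This is exactly the $\delta = \Theta(D)$ choice made in the proof of Theorem~\ref{theorem:n-spanner}, inflated by the extra factor $n^{\epsilon}$ coming from the larger edge budget.

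Next I would substitute these parameter values into \eqref{eq:spanner-construction-n} and collect the exponents of $D$ and $n^{\epsilon}$ separately. The $D$ exponent should reproduce the $3+2d+2(d+1)/(d-1)$ from the $\epsilon=0$ analysis, and the $n^{\epsilon}$ exponent should land on $(3d+1)/(d-1)$ via the identity $2(d+1)/(d-1)+1 = (3d+1)/(d-1)$. Solving the resulting relation for $D$ yields the stretch bound stated in the corollary, and the hypothesis $\epsilon \le (d-1)/(3d+1)$ is precisely what is needed to keep the numerator of the exponent nonnegative. Specializing to $d=3$ collapses the denominator to $13$ and the coefficient of $\epsilon$ in the numerator to $5$, recovering the claimed $\Omega(n^{1/13-5\epsilon/13})$.

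I do not expect a substantive obstacle beyond bookkeeping. The qualitative argument is identical to the $\epsilon=0$ case; the new content lies entirely in confirming that the $n^{\epsilon}$ exponent collects correctly. Two minor sanity checks remain: that $r$ and $\delta$ can be rounded to positive integers without disturbing the asymptotic counts, and that the inequality $\delta\ge D$ continues to hold so that the WLOG ordering $\delta_1\ge \delta_2\ge D$ used in the proof of Theorem~\ref{theorem:n-spanner} remains valid. Both are immediate for $\epsilon\ge 0$ and $n$ sufficiently large.
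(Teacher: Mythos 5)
Your proof is correct and follows exactly the route the paper intends: the corollary is obtained from the construction of Theorem~\ref{theorem:n-spanner} by strengthening the parameter constraint from $\delta_1 \ge \delta_2 \ge D$ (giving $D|P|=\Omega(n)$) to $\delta_1 \ge \delta_2 \ge Dn^\epsilon$ (giving $D|P|=\Omega(n^{1+\epsilon})$), and then rerunning the same chain of substitutions in \eqref{eq:spanner-construction-n}. Your exponent bookkeeping (the identity $2\tfrac{d+1}{d-1}+1=\tfrac{3d+1}{d-1}$, the unchanged denominator $3+2d+2\tfrac{d+1}{d-1}$, and the specialization to $d=3$) checks out.
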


\subsection{An Improved \texorpdfstring{$O(n)$}{O(n)}-Size 
Spanner Lower Bound}\label{subsection:improved-n-spanner}

The construction from Section~\ref{subsection:n-spanners} is versatile,
inasmuch as it extends to polynomial densities (Corollary~\ref{cor:spanner-lb}) and emulator lower bounds (Section~\ref{subsection:n-emulators}).  
However, it
can be improved, slightly, for the specific case of $O(n)$-size 
additive spanners.  
It turns out that the the Cartesion product step (generating $G_0$ from $G_{(d_1, r_1, D)}\otimes G_{(d_2, r_2, D)}$)
is inefficient, 
and that we can do better with a simple replacement step.

By its nature, the proof of Theorem~\ref{theorem:n-spanner-harder} needs to
explictly keep track of the leading absolute constant 
in the size of the spanner, i.e., it has at most
$c_0n = O(n)$ edges.  (In contrast, the proof of Theorem~\ref{theorem:n-spanner} can easily accommodate
any $O(n)$-size bound by tweaking $r,R,D$ by constant factors.)

\begin{theorem}\label{theorem:n-spanner-harder}
For any parameter $c_0>1$ and sufficiently large $n$
there exists an undirected $n$-vertex graph $G$ 
such that any spanner for $G$ with at most $c_0n$ 
edges has $+\Omega(n^{1/11}c_0^{-18/11})$ additive stretch.
\end{theorem}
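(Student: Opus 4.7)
The plan is to improve the $+\Omega(n^{1/13})$ stretch lower bound of Theorem~\ref{theorem:n-spanner} to $+\Omega(n^{1/11} c_0^{-18/11})$, matching the exponent in the $O(m)$-size shortcut bound of Theorem~\ref{theorem:m-shortcuts}. The discussion preceding the statement identifies the Cartesian product step---generating $G_0 = G_{(d_1, r_1, D)} \otimes G_{(d_2, r_2, D)}$ and then subdividing each edge $D$ times---as the inefficient ingredient of Section~\ref{subsection:n-spanners}, since the subdivision inflates the vertex count by roughly a factor of $D^2$, preventing us from matching the $D^{11}$ of Theorem~\ref{theorem:m-shortcuts}. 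The remedy is a single-step replacement rather than the two-step subdivide-then-cliquify construction.

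The approach would proceed as follows. I would start from a base graph closely related to the $O(m)$-shortcut construction of Section~\ref{subsection:using-m-shortcuts} (which already yields $n = \Omega(D^{11})$ for shortcut sets), regarded as undirected, and then apply a single simple replacement step that converts the shortcut-size lower bound into a spanner-size lower bound. The replacement must (i) preserve the critical-pair structure---unique shortest paths sharing at most one edge, as in Lemmas~\ref{lemma:uniquepath} and~\ref{lemma:intersection-two-paths-ii}; (ii) produce a graph with $m = O(c_0 n)$ edges; and (iii) penalize each missing gadget edge with $\Omega(1)$ additive stretch, so that a spanner with too few edges is forced into $+\Omega(D)$ stretch on some critical pair.

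The parameter tuning sets $\delta_1, \delta_2 = \Theta(c_0^{\gamma} D)$ for an exponent $\gamma$ determined by the gadget's sparsity-versus-penalty tradeoff. Substituting into the base-graph relations $r_j^{d_j(d_j-1)/(d_j+1)} = \Theta(\delta_j)$ and $R_j = \Theta(r_j D)$ with optimal $d_j \in \{2,3\}$ (as in Section~\ref{subsection:using-m-shortcuts}), and balancing the pigeonhole condition $c_0 n < D|P|$, I would derive $n = \Omega(D^{11} c_0^{18})$, hence $D = \Omega(n^{1/11} c_0^{-18/11})$. Unlike Theorem~\ref{theorem:n-spanner}, the proof has to track the constant $c_0$ explicitly throughout, since asymptotic $O(\cdot)$ notation would otherwise absorb exactly the factors we are trying to extract, which is why the theorem statement anticipates the need to avoid hiding $c_0$ in the asymptotics.

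The main obstacle will be the stretch analysis, specifically ruling out cheap alternative paths. In Theorem~\ref{theorem:n-spanner} the $D$-fold edge subdivision ensured that any alternative base-graph diversion costs at least $+\Omega(D)$, matching the sum of $D$ local $+\Omega(1)$ detours around missing clique edges (as in the proof of Lemma~\ref{lemma:spanner-construction-lowerbound}). Without such heavy subdivision, a single joint diversion through $G_0$ could in principle be much cheaper than the sum of local detours, and the naive pigeonhole breaks down unless the replacement gadget is specifically designed to forbid this. Establishing the required additivity---that $\Omega(D)$ independently missing gadget edges must contribute $+\Omega(D)$ in total, not $+O(1)$ via a single joint detour---is where the real technical work lies, and the polynomial tightness of this analysis is what produces the precise $c_0^{-18/11}$ factor in the stretch bound.
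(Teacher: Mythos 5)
Your proposal correctly identifies the target exponent, the role of the constant $c_0$, and---most importantly---the crux of the difficulty (ensuring that many independent missing gadget edges sum to $+\Omega(D)$ stretch rather than collapsing to a single cheap joint detour). But you then stop short of resolving that crux, and you also misread the hint in the paper's preamble: the sentence says the Cartesian product step itself is what is inefficient and should be removed, whereas you keep the product $G_{(d_1,r_1,D)}\otimes G_{(d_2,r_2,D)}$ and only propose to simplify the subdivide-then-cliquify post-processing. The paper does the opposite: it \emph{drops the product entirely}, takes a single $(L+1)$-layer $3$-dimensional outer graph $G_0$ (Lemma~\ref{lemma:spanner-outer-base-graph}), and substitutes a small dense $2$-dimensional ``inner'' graph $G_I$ (Lemma~\ref{lemma:inner-graph-construction}) for every interior outer vertex, matching the outer critical edges to input/output ports of $G_I$. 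The inner graph carries a tunable density parameter $c$ that makes it $c$-dense with $q$ internally edge-disjoint long critical paths; this is what pumps the edge count up to $c_0 n$ while keeping $n=\Theta(L^{11}c_0^{18})$.

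The part you flag as ``the real technical work'' is exactly the gap. The paper handles the joint-detour issue with a \emph{retained but lighter} edge-subdivision step (subdivide each outer edge into $L/2$ pieces, not $D$ as in Theorem~\ref{theorem:n-spanner}), so that any detour in the bipartite outer graph still costs $+L$, while any critical path that avoids an outer detour but is missing inner edges in at least $(L-1)/2$ of its inner copies pays $+2$ in each such copy by the port-distance guarantees of Lemma~\ref{lemma:inner-graph-construction}; the pigeonhole is then applied to the total edge budget $cL\frac{L-1}{2}|P|\ge c_0|V|$ (Lemma~\ref{lemma:better-spanner-edges-lowerbound}). Your outline contains none of these ingredients---no specification of what the ``simple replacement'' gadget is, no analogue of the port/entry-distance property needed to localize the $+2$ penalties, and no replacement for the outer subdivision that prevents cheap outer-graph detours---so the proposal as written does not establish the theorem and the parameter count $n=\Omega(D^{11}c_0^{18})$ is asserted rather than derived. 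You would also run into an arithmetic obstruction: in the product construction the $2D+1$ outer layers already force an $n\ge D^{1+d_1+d_2+\cdots}$ overhead, and even with no subdivision at all the exponent with $d_1,d_2\in\{2,3\}$ does not drop below $13$ once the clique side sizes are constrained by $\delta_1\ge\delta_2\ge D$, which is why the paper switches constructions rather than tuning the old one.
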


In Lemmas~\ref{lemma:inner-graph-construction} and \ref{lemma:spanner-outer-base-graph} we construct the \emph{inner}
and \emph{outer} graphs, then discuss how to 
combine them using a substitution product.

\begin{lemma}[Inner Graph Construction]\label{lemma:inner-graph-construction}
Fix a parameter $c>1$. 
There exists sufficiently large $q, L$ 
such that $q = \Theta(L^2c^6)$ and a graph 
$G_I = (V_I, E_I)$ with a set of critical pairs 
$P_I \subseteq V_I \times V_I$ satisfying the following.
\begin{enumerate}
    \item $|V_I|\le qL$.
    \item $|P_I|\ge q$.
    \item $\forall (u, v)\in P_I$, the shortest path between $u$ and $v$ is unique and has length $Lc$.
    \item $\forall (u_1, v_1), (u_2, v_2)\in P$, the unique shortest paths between $u_1$ and $v_1$ and between $u_2$ and $v_2$ are edge-disjoint. Moreover, $\dist_{G_I}(u_1, v_2) \ge Lc$ and $\dist_{G_I}(u_2, v_1)\ge Lc$. (As a consequence, $|E_I|\ge cqL$.)
\end{enumerate}
\end{lemma}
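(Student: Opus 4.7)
The plan is to instantiate the layered base graph $G_{(d,r,D)}$ from Section~\ref{subsection:using-n-shortcuts}, viewed as undirected, with $d=2$ and carefully chosen parameters, and then verify the four conditions one by one.

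First I would set $D := Lc$, $r := \Theta(c^{3/2})$, and $R := drD = \Theta(Lc^{5/2})$, take $G_I$ to be $G_{(2,r,D)}$ viewed as an undirected graph, and let $P_I$ be its critical pair set. Lemma~\ref{lemma:critical-pairs} gives $|V_I| = \Theta(R^{2}D) = \Theta(L^{3}c^{6})$ and $|P_I| = \Theta(R^{2}r^{2/3}) = \Theta(L^{2}c^{6})$, so defining $q := |P_I|$ yields $q = \Theta(L^{2}c^{6})$ and $|V_I| = \Theta(qL)$. This disposes of conditions (1) and (2) and the claim $q=\Theta(L^2 c^6)$.

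For condition (3), the length is $D = Lc$ by construction, and uniqueness of the shortest path in the undirected graph reduces to uniqueness in the directed graph: since every edge of $G_I$ joins two consecutive layers, any undirected walk from a layer-$0$ vertex to a layer-$D$ vertex must take at least $D$ forward steps and has length $\ge D$ with equality only when the walk is monotone across layers. Hence every undirected shortest $u$-$v$ path for $(u,v)\in P_I$ is actually a directed shortest path, and Lemma~\ref{lemma:critical-pairs} applies. The edge-disjointness half of (4) is likewise inherited directly from Lemma~\ref{lemma:critical-pairs}. The remaining "$\dist_{G_I}(u_1,v_2)\ge Lc$" assertions follow from the same layering argument: $u_1,u_2$ sit in layer $0$ and $v_1,v_2$ in layer $D$, so every path between them has length $\ge D = Lc$. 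Finally, each interior-layer vertex has degree $\Theta(r^{2/3}) = \Theta(c)$, so $|E_I| = \Theta(|V_I| \cdot c) = \Theta(L^{3}c^{7}) = \Omega(cqL)$, as required.

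The only real subtlety is checking that passing to the undirected graph does not destroy uniqueness of shortest paths or shorten inter-pair distances; both of these risks are killed cleanly by the fact that every edge joins consecutive layers, so the layer index is a monotone "potential" along any shortest path from layer $0$ to layer $D$. Beyond that, the argument is bookkeeping: choose the constant in $r = \Theta(c^{3/2})$ so that $q$ hits the claimed $\Theta(L^2 c^6)$ and the edge-count bound $|E_I| \ge cqL$ holds with the correct absolute constants. The alternative setting $d=3$ with $r = \Theta(c^{2/3})$ gives the same asymptotic parameters and could be used interchangeably.
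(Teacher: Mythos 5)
Your proposal is correct and is essentially the paper's own construction: the paper also builds the inner graph as the $d=2$ layered lattice/convex-hull graph with $Lc+1$ layers and hull radius $\Theta(c^{3/2})$ (there written $\xi_d d^{3/2}$ with density parameter $d=4c$), and verifies conditions 3--4 by the same layer-monotonicity argument. The only stylistic difference is that the paper fixes explicit absolute constants ($\eta_2,\xi_d$, $d\ge 4c$) to get $|V_I|\le qL$ and $|P_I|\ge q$ with leading constant $1$, whereas you defer this to choosing the constant in $r=\Theta(c^{3/2})$, which indeed works since $|V_I|/(qL)=\Theta(c/r^{2/3})$ with an absolute hidden constant.
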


\begin{proof}
We use almost the same construction as in Theorem~\ref{theorem:n-shortcuts}, except that the graph will be undirected and we will pay closer attention to the density.
In this proof $d=O(1)$ represents
the density of the graph, not the geometric dimension.
We will ultimately choose $d=4c$. 
The graph $G_I$ we construct consists of $Lc+1$ layers, numbered by $0, 1, \ldots, Lc$.

Recall that $\ConvexHull_2(r)$ is the set of all lattice points at the corners of the convex hull of $B_2(r)$. 
Let $\eta_2=\Theta(1)$ be the ratio between the area of a circle and the area of its circumscribed square, and 
let $\xi_d = \Theta(1)$ be such that
$|\ConvexHull_2(\xi_d d^{3/2})| \ge d/(4\eta_2)$.
(It follows from~\cite{Barany98} that $\xi_d=\Theta(1)$.)
On the $k$-th layer, the vertices are labelled by $(a, k)$ where 
$a\in B_2(\sqrt{q/d} + k\xi_d d^{3/2})$.
For each layer-$k$ vertex $(a, k)$, $k\in\{0, \ldots, Lc-1\}$ and each vector $v\in \ConvexHull_2(\xi_d d^{3/2})$, we connect an (undirected) edge between $(a, k)$ and $(a+v, k+1)$.

By choosing $\sqrt{q/d}=(Lc)\xi_d d^{3/2}$, 
we have 
$q=L^2c^2\xi_d^2 d^4 = \Theta(L^2c^6)$
and the total number of vertices in $G$ can be 
upper bounded by the number of layers times the size of the last layer: $(Lc)(2\sqrt{q/d})^2 = 4qLc/d$. 
Thus, condition 1 is satisfied whenever $d\ge 4c$.

Define 
\[
P_I =\left\{ ((a, 0), (a+(Lc)v, (Lc))) \ \bigg|\ a\in B_2(\sqrt{q/d}) \text{ and } v\in \ConvexHull_2(\xi_d d^{3/2}) \right\}.
\]
We have that $|P_I| = |B_2(\sqrt{q/d})|\cdot |\ConvexHull_2(\xi_d d^{3/2})| \ge 4\eta_2(q/d)\cdot d/(4\eta_2) = q$, so condition 2 is satisfied.

Now, for each pair of vertices $((a, 0), (a+(Lc)v, (Lc)))$ in $P$, there is an unique shortest path from $(a, 0)$ to $(a+(Lc)v, (Lc))$ by Lemma~\ref{lemma:critical-pairs}. Moreover, since the graph is a layered graph, any path from a vertex in the $0$-th layer to \emph{any} vertex in the $(Lc)$-th layer has length at least $Lc$, satisfying constraints 3 and 4.
\end{proof}

Again, we use a similar construction to  Theorem~\ref{theorem:n-shortcuts} to obtain our outer graph.

\begin{lemma}[Outer Graph Construction, 3D version]~\label{lemma:spanner-outer-base-graph}
For any given $q, L\in \mathbb{N}$, there exists an undirected graph $G_0=(V_0,E_0)$ with a set of critical pairs $P\subseteq V_0\times V_0$ satisfying:
\begin{enumerate}
    \item $|V_0| = \Theta(L^4q^2)$.
    \item $|P| = \Theta(L^3q^3)$.
    \item $\forall (u, v)\in P$, the shortest path between $u$ and $v$ (denoted by $P_{uv}$) is unique. Moreover, $P_{uv}$ has length exactly $L$.
    \item $\forall (u_1, v_1), (u_2, v_2)\in P$, the unique shortest paths between $u_1$ and $v_1$ and between $u_2$ and $v_2$ are edge-disjoint.
\end{enumerate}
\end{lemma}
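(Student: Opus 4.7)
The plan is to instantiate the lattice-based digraph of Theorem~\ref{theorem:n-shortcuts} in dimension $d=3$, view it as an \emph{undirected} graph, and tune the radius parameters so that the target counts are met. Specifically I would take $D=L$ (so that any path from layer~$0$ to layer~$L$ has length at least $L$), $R=\Theta(Lq^{2/3})$, and $r=R/(3L)=\Theta(q^{2/3})$; the relation $R=3rL$ needed for all layers to have size $\Theta(R^3)$ is automatic. With these parameters the vertex count is $\Theta(R^3 L)=\Theta(L^4 q^2)$, matching condition~(1). Since $|\ConvexHull_3(r)|=\Theta(r^{3/2})$ by B\'ar\'any--Larman, the critical-pair set
\[
P=\{((a,0),(a+Lv,L)) : a\in B_3(R),\ v\in\ConvexHull_3(r)\}
\]
has size $\Theta(R^3 r^{3/2})=\Theta(L^3 q^3)$, matching condition~(2).

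For conditions~(3) and (4) I would essentially replay Lemma~\ref{lemma:critical-pairs}. The only thing new to check is that dropping edge orientations does not create shorter paths. Because the construction is layered, with edges joining only consecutive layers, any undirected walk from layer~$0$ to layer~$L$ has length at least $L$, with equality iff it is monotone in the layer index. A monotone $L$-path from $(a,0)$ decomposes as a sum of $L$ vectors drawn from $\ConvexHull_3(r)$, so the strict-convexity argument from the proof of Lemma~\ref{lemma:critical-pairs} shows that for each critical pair the length-$L$ path in the construction is the unique shortest path. Edge-disjointness then follows because any edge between layers $k$ and $k+1$ uniquely determines the base point $a$ and the step-vector $v$, and hence uniquely identifies the critical pair whose shortest path contains it.

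I do not anticipate any substantial obstacle. The one spot that needs a little care is choosing the hidden constants so that the asymptotic identities $|B_3(R)|=\Theta(R^3)$ and $|\ConvexHull_3(r)|=\Theta(r^{3/2})$ are realized simultaneously with the integrality constraints $a\in\mathbb{Z}^3$ and $v\in\ConvexHull_3(r)$ — exactly the bookkeeping already performed for the inner graph in Lemma~\ref{lemma:inner-graph-construction}. Since the statement of Lemma~\ref{lemma:spanner-outer-base-graph} only asks for $\Theta(\cdot)$ bounds (not absolute constants like $|V_I|\le qL$ in the inner construction), this is strictly easier than the inner-graph case, and taking $q$ and $L$ sufficiently large absorbs any rounding issues.
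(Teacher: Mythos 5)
Your proposal is correct and essentially identical to the paper's proof: a $3$-dimensional layered lattice graph with hull-corner step vectors, B\'ar\'any--Larman for the count $\Theta(r^{3/2})$, and the strict-convexity/layering argument for uniqueness and edge-disjointness (you even handle the undirected subtlety --- that a length-$L$ walk must be layer-monotone --- more explicitly than the paper does). The only cosmetic difference is that the paper trims the hull to a subset $\ConvexHull'_3(r)$ of exactly $q$ vectors (convenient later for matching the $q$ ports in the inner-graph replacement), whereas you keep all $\Theta(q)$ of them, which is immaterial for the $\Theta(\cdot)$ bounds the lemma asserts.
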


\begin{proof}
Consider the following $(L+1)$-layer graph. 
Vertices in the $k$-th layer are identified with points in 
the 3-dimensional integer lattice inside the 
ball of radius $Lr+kr$ around the origin. 
Here $r$ is the minimum value such that 
$|\ConvexHull_3(r)|\ge q$. From B\'ar\'any and Larman~\cite{Barany98} we have $r=\Theta(q^{2/3})$.

We label each vertex with its coordinate and its layer number: $(a, k)\in B_3(Lr+kr)\times [L+1]$.
Fix an arbitrary subset $\ConvexHull'_3(r)\subseteq \ConvexHull_3(r)$ of any $q$ vectors.
For each vertex $(a, k)$ in the $k$-th layer ($0\le k < L$), and for every vector $x\in \ConvexHull'_3(r)$, the edge $((a, k), (a+v, k+1))$ is added to the graph.

It is straightforward to check that 
$|V_0|\approx \sum_{k=0}^L \eta_3 (2(Lr+kr))^3 = \Theta(L^4q^2)$. For each vector $v\in\ConvexHull'_3(r)$ and each layer-0 vertex $(a, 0)$, the vertex pair $((a, 0), (a+Lv, L))$ is added to the critical pair set $P$, 
hence $|P|=\Theta((Lr)^3q) = \Theta(L^3q^3)$. 
By the same argument as in the proof of Lemma~\ref{lemma:critical-pairs}, there is exactly one shortest path of length $L$ connecting $(a, 0)$ and $(a+Lv, L)$. Moreover, no edge belongs to more than one critical path.
\end{proof}

Recall that we are aiming for lower bounds against spanners
with size $c_0n$.  Once $c_0$ is fixed, we choose a $c=\Theta(c_0)$
and invoke Lemma~\ref{lemma:inner-graph-construction} to construct
an inner graph $G_I$ with parameters $q,L$.  Once $q,L$ are fixed
we invoke Lemma~\ref{lemma:spanner-outer-base-graph} to build
the outer graph $G_0$.  Our final graph $G$ is formed from $G_0,G_I$
through the \emph{inner graph replacement step} and the \emph{edge subdivision step}, as follows.

\paragraph*{Inner Graph Replacement Step.}
For every vertex $(a, k)\in G_0$ $(0 < k < L)$, 
we replace $(a, k)$ with a copy of inner graph $G_I$ as follows.

Recall that the critical pair set for $G_I$ has size $q$.  We regard the sources of these $q$ pairs to be \emph{input ports} and the sinks to be \emph{output ports}.
Let $G_{I, (a,k)}$ be the copy of $G_I$ substituted for $(a,k)$ in the outer graph.  
For each $v_i\in \ConvexHull'_3(r)=\{v_1, v_2, \ldots, v_q\}$
and each critical path of $G_0$ passing through 
$(a-v_i,k-1), (a,k), (a+v_i, k+1)$, we reattach one
endpoint of $(a - v_i,k-1)$ to the $i$th input port of
$G_{I, (a,k)}$ and reattach one endpoint of $(a +v_i, k+1)$
to the $i$th output port of $G_{I, (a,k)}$.  
Let $G^*$ be the result of this process.

\paragraph*{The Edge Subdivision Step.}
Every edge in $G^*$ that was inherited from $G_0$ 
(i.e., not inside any copy of $G_I$)
is subdivided into a path of $L/2$ edges.
The outcome of this process is $G$.

\medskip
Observe that for every critical pair $(x, y)$ from $G_0$, 
there is a unique shortest path between $x$ and $y$ in 
$G$ of length $\frac12 L^2 + (L-1)Lc$, where $\frac12 L^2$ edges come
from the subdivision step and the remaining ones come from 
copies of $G_I$.
Moreover, any two unique shortest paths are edge-disjoint.

\begin{lemma}\label{lemma:better-spanner-edges-lowerbound}
Every spanner of $G$ with additive stretch $+(L-2)$ contains at 
least \\
$\left(\frac12 L^2+cL\left(\frac{L-1}{2}\right)\right)|P|$ edges.
\end{lemma}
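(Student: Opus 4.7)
The plan is to mirror Lemma~\ref{lemma:spanner-construction-lowerbound}: show that the critical paths in $G$ are pairwise edge-disjoint, and then lower-bound, per critical pair, how many edges of the unique critical path must survive in any $+(L-2)$ spanner. Edge-disjointness of the subdivided portions follows from Lemma~\ref{lemma:spanner-outer-base-graph}(4), since distinct critical pairs of $G_0$ use disjoint $G_0$-edges and therefore disjoint subdivided segments. For the inner-graph portions, any two critical paths of $G_0$ passing through the same interior vertex $(a,k)$ are indexed by distinct vectors $v_i\neq v_j\in\ConvexHull'_3(r)$, so inside the copy $G_{I,(a,k)}$ they traverse the $v_i$-to-$v_i$ and $v_j$-to-$v_j$ shortest paths respectively, which are edge-disjoint by Lemma~\ref{lemma:inner-graph-construction}(4). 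Thus the $|P|$ critical paths in $G$ share no edges, and it suffices to bound edges per pair.

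Next I would show that every subdivided edge on a critical path $\Pi_{xy}$ must lie in the spanner. A subdivision vertex has degree $2$ in $G$, so removing any edge of a subdivided segment locally disconnects that segment and forces the spanner's $x$-to-$y$ path to correspond, after contracting each $G_I$-copy and each subdivided segment, to a \emph{different} walk in $G_0$. Because $G_0$ is bipartite and $\Pi_{xy}$ projects to the unique length-$L$ shortest walk from $x$ to $y$, any alternative $G_0$-walk has length $\geq L+2$, i.e., traverses at least two extra subdivided segments and two extra $G_I$-copies, incurring $\geq 2\cdot(L/2) + 2\cdot Lc = L + 2Lc$ additional edges in $G$. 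This exceeds the $+(L-2)$ stretch budget, so all $L\cdot(L/2)=\tfrac{L^2}{2}$ subdivided edges on $\Pi_{xy}$ survive.

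The third step is to argue that at least $(L-1)/2$ of the $L-1$ inner-graph sub-paths of $\Pi_{xy}$ are \emph{fully} preserved. Fix a $G_I$-copy traversed from input port $v_i$ to output port $v_i$; the unique shortest $v_i$-to-$v_i$ path has length $Lc$. Any other such path has length $\geq Lc+2$: in the layered undirected $G_I$, a monotone length-$Lc$ walk corresponds to an ordered sequence of $Lc$ vectors from $\ConvexHull'_3(r)$ summing to $Lc\cdot v_i$, which by strict convexity (as in Lemma~\ref{lemma:critical-pairs}) forces every vector to equal $v_i$, so every non-unique path must backtrack at least once and hence adds two edges. Exiting the copy through any port other than $v_i$ again triggers the $+(L+2Lc)$ $G_0$-rerouting cost and is infeasible. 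Consequently each $G_I$-copy with a missing edge contributes $\geq +2$ to the additive stretch; with a total budget of $L-2$, at most $(L-2)/2$ copies can be damaged, leaving $\geq (L-1)/2$ copies intact and contributing $\geq \tfrac{cL(L-1)}{2}$ edges. Adding the two contributions gives $\bigl(\tfrac{L^2}{2} + \tfrac{cL(L-1)}{2}\bigr)|P|$ as required.

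The main obstacle I anticipate is rigorously ruling out \emph{global} detours, i.e., alternative routings that combine several small detours spanning multiple $G_I$-copies, or that leave and re-enter a single $G_I$-copy, in a way that might be cheaper than summing local $+2$ penalties. I would address this by phrasing the entire detour accounting in terms of the projected walk in $G_0$, so that every deviation from the unique critical routing either stays inside a single $G_I$-copy (costing $+2$ per damaged copy by the convex-hull argument) or alters the $G_0$-projection (costing $\geq L+2Lc$, which saturates the budget immediately).
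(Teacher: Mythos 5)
Your proposal is correct and takes essentially the same approach as the paper: the paper argues by pigeonhole and contradiction that some critical pair would either miss a subdivided edge (forcing a longer $G_0$-projection and hence $+L$ stretch by the layered parity argument) or miss inner edges in at least $(L-1)/2$ copies of $G_I$ (each costing $+2$), while you run the identical accounting directly per pair and sum over the edge-disjoint critical paths. Two cosmetic slips --- referring to the outer vector set $\ConvexHull'_3(r)$ when analyzing the inner graph, whose vectors lie in $\ConvexHull_2(\xi_d d^{3/2})$, and the unneeded claim that a projection change also costs two extra full $G_I$-traversals ($+2Lc$) --- do not affect the argument.
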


\begin{proof}
Suppose there exists a spanner $H$ of $G$ with additive stretch $+(L-2)$ but has strictly less than $\left(\frac12 L^2+cL\left(\frac{L-1}{2}\right)\right)|P|$ edges.
By the pigeonhole principle there must exist a critical 
pair $(x, y)\in P$ with unique shortest path $P_{(x, y)}$
that is in one of the following two cases:
(1) $H$ is missing an edge in $P_{(x,y)}$ introduced in 
the edge subdivision step, 
or 
(2) $H$ is missing at least one critical edge from $P_{(x,y)}$
in at least half ($(L-1)/2$) of the 
copies of $G_I$ along $P_{(x,y)}$.

Let $P'_{(x, y)}$ be a 
shortest path connecting $x$ and $y$ in $H$. 
If (1) holds,
then $P'_{(x,y)}$ traverses at least two more subdivided edges
than $P_{(x,y)}$ and at least the same number of copies of $G_I$,
hence $|P'_{(x, y)}| \ge |P_{(x, y)}| + L$, a contradiction.
If (2) holds, then for every inner graph that has a missing edge on $P_{(x, y)}$, $P'_{(x, y)}$ traverses at least two more edges.
Since there are at least $(L-1)/2$ such inner graphs, 
$|P'_{(x, y)}| \ge |P_{(x, y)}| + (L-1)$.  In either case, $P'_{(x, y)}$ has at least $+(L-1)$ additive stretch so $H$ cannot be a $+(L-2)$ spanner.
\end{proof}

\begin{proof}[Proof of Theorem~\ref{theorem:n-spanner-harder}.]
Given the density parameter $c_0$, we will choose a larger 
parameter $c=\Theta(c_0)$ (defined precisely below)
and construct the inner graph $G_I$ (Lemma~\ref{lemma:inner-graph-construction}) 
with at most $qL$ vertices,
at least $cqL$ edges, and $q$ critical pairs, 
with $q = \Theta(L^2c^6)$.
Once $q,L$ are fixed, we construct the outer graph $G_0$
using Lemma~\ref{lemma:spanner-outer-base-graph}.
After the replacement and subdivision steps, $G$
has $|V| = \Theta(L^5q^3)$ vertices and $|P| = \Theta(L^3q^3)$ critical pairs.  This implies that there is some absolute constant
$\lambda>1$ such that 
$\lambda L\left(\frac{L-1}{2} \right)|P| \ge |V|$.

Now, by Lemma~\ref{lemma:better-spanner-edges-lowerbound}, any spanner of $G$ with at most $cL\left(\frac{L-1}{2} \right)|P|$ edges has additive stretch at least $+(L-1)$. 
We choose $c=\lambda c_0$, 
so $cL\left(\frac{L-1}{2} \right)|P| \ge c_0|V|$.
Therefore, any spanner of $G$ with at most $c_0|V|$ edges 
has additive stretch at least $+(L-1)$.
Since $q=\Theta(L^2c^6)=\Theta(L^2c_0^6)$, 
it follows that $|V|=\Theta(L^{11}c_0^{18})$. 
Thus, we conclude that any spanner of $G$ with $c_0n$
edges has additive stretch $+\Omega(|V|^{1/11}c_0^{-18/11})$.
\end{proof}

\begin{remark}
It follows from Theorem~\ref{theorem:n-spanner-harder}
that any $\Theta(n^{1+\epsilon})$-size spanner has 
$+\Omega(n^{\frac{1}{11}-\frac{18}{11}\epsilon})$ stretch, which is only
better than the $+\Omega(n^{\frac{1}{13} - \frac{5}{13}\epsilon})$ bound of Corollary~\ref{cor:spanner-lb} when $\epsilon < 2/181$
is quite small.
\end{remark}

\begin{remark}
The construction from Theorem~\ref{theorem:n-spanner-harder} cannot be easily traslated into an emulator lower bound. The reason is that the number of critical pairs is always sublinear in the number of vertices.
A distance preserving emulator of linear size always exists in this type of construction.
\end{remark}

\subsection{\texorpdfstring{$O(n)$}{O(n)}-Size Emulators}\label{subsection:n-emulators}

The difference between emulators and spanners is that emulators
can use weighted edges not present in the original graph.
In this section, our lower bound graph, $G$, is constructed exactly as 
in Section~\ref{subsection:n-spanners}, 
but with different numerical parameters. 

\begin{theorem}\label{theorem:n-emulator}
There exists an undirected graph $G$ with $n$ vertices such that 
any emulator with $O(n)$ edges has $+\Omega(n^{1/18})$ additive stretch.
\end{theorem}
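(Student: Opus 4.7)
The plan is to reuse the construction of Section~\ref{subsection:n-spanners} --- the Cartesian product $G_0 = G_{(d_1,r_1,D)}\otimes G_{(d_2,r_2,D)}$ followed by edge subdivision and clique replacement --- but to retune the parameters to make the graph dense in critical pairs. Taking $d_1=d_2=3$ and scaling $\delta_1=\delta_2=\Theta(CD^2)$ for a constant $C$ chosen larger than (twice) the implicit constant in the $O(n)$ bound, the counts \eqref{eq:spanner-construction-n}--\eqref{eq:spanner-construction-p} give $n=\Theta(D^{18})$ and $|P|=\Theta(Cn)$, so that $|P|$ can be made any prescribed constant multiple of $n$.

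The core of the proof is a ``bypass cost'' analog of Lemma~\ref{lemma:spanner-construction-lowerbound}. Since an emulator edge has weight $\dist_G(u,v)$, the emulator path $\pi_{xy}$ for a critical pair $(x,y)$ unfolds into a $G$-walk of length at most $\dist_G(x,y)+\beta$. By the bipartite-$G_0$ reasoning used for Lemma~\ref{lemma:spanner-construction-lowerbound}, any non-monotone $G_0$-deviation costs at least $2D+2$, so for $\beta$ below this threshold the walk must project onto the unique $G_0$-critical path of $(x,y)$. Within each of the $2D-1$ interior cliques the walk either traverses the natural clique edge (cost $1$) or bypasses it via $3,5,\dots$ clique edges (cost $+2,+4,\dots$); hence the pair admits at most $\beta/2$ clique bypasses, leaving at least $2D-1-\beta/2$ natural clique edges of $(x,y)$ that must be \emph{covered} by some emulator edge used in $\pi_{xy}$.

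The second ingredient is an emulator analog of Lemma~\ref{lemma:clique-edge-path-relation}: every emulator edge $e'=(u,v)$ whose shortest $G$-path crosses at least one interior clique (a ``spanning edge'') is \emph{natural} to exactly one critical pair. This holds because the $G_0$-coordinates of $u,v$, together with the alternating $G_1/G_2$ structure of any length-$\ge 1$ $G_0$-subpath, force both vector parameters $(v^*,w^*)\in\ConvexHull_3(r)\times\ConvexHull_3(r)$ of the pair, and then the basepoint $(a',b')$ follows from the positions. Writing $c(e')\le 2D-1$ for the number of cliques crossed by the $G$-path of $e'$ and letting $B$ denote the set of spanning edges, a double count over natural coverage yields
\begin{equation*}
\sum_{e'\in B} c(e') \;\ge\; \Bigl(2D-1-\tfrac{\beta}{2}\Bigr)|P|, \qquad \text{hence} \qquad |E'|\;\ge\;|B|\;\ge\;|P|\!\left(1-\frac{\beta}{4D-2}\right).
\end{equation*}
With $|E'|\le c_0 n$ and $|P|\ge 2c_0 n$, this forces $\beta\ge(4D-2)/2=\Omega(D)=\Omega(n^{1/18})$.

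The step I expect to be the main obstacle is the unique-natural-pair claim, which must be verified for spanning edges whose endpoints can be any combination of interior-layer $G_0$-vertices, clique in-/out-vertices, or interior subdivision vertices. A careful case analysis based on the layer and parity of each endpoint --- tracking how a clique in-vertex determines one of $(v^*,w^*)$ directly through its vector label and how a subdivision vertex determines one through its underlying $G_0$-edge --- should pin down both components of $(v^*,w^*)$ as soon as the edge spans a single clique, with the basepoint $(a',b')$ then forced by the endpoint positions. Once this is in place, the remaining deduction is a straightforward double count in the spirit of Section~\ref{subsection:n-spanners}.
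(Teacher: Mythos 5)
Your overall strategy (same product-plus-subdivision-plus-clique construction, parameters tuned so that $|P|=\Theta(Cn)$ with $d_1=d_2=3$, $\delta_1=\delta_2=\Theta(CD^2)$, then a counting argument over clique edges) is sound, and the parameter arithmetic checks out. But the step you yourself flag as the main obstacle is a genuine gap, and as stated it is false: a ``spanning'' emulator edge is \emph{not} natural to exactly one critical pair. First, shortest $G$-paths between arbitrary endpoints are not unique (e.g.\ $u=(a,b,0)$ and $v$ a port of a clique at displacement $2v_1+v_2$ in the first coordinate and $w_1+w_2$ in the second admit several shortest paths using different clique edges), so ``the $G$-path of $e'$'' and $c(e')$ are not even well defined. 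Second, and more fundamentally, even a single fixed shortest path can use clique edges belonging to many \emph{different} critical pairs: a monotone path whose consecutive steps use $v_1,w_1,v_2,w_2,v_1$ crosses four cliques via the clique edges $(v_1,w_1),(w_1,v_2),(v_2,w_2),(w_2,v_1)$, which are natural to four distinct pairs. So the alternating structure does not ``force both vector parameters'' unless the endpoints' displacement happens to be a multiple of a single pair $(v^*,w^*)$, which an adversarial emulator need not respect. Consequently your displayed inequality, which charges each spanning edge to a single pair with weight $c(e')\le 2D-1$, is not justified, and the clean bound $|E'|\ge |P|\left(1-\frac{\beta}{4D-2}\right)$ does not follow.

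The good news is that the argument is repairable without the false claim, and the repair essentially lands on the paper's proof. Fix one canonical shortest $G$-path $\gamma(e')$ per emulator edge (legitimate, since the weight equals the path length) and unfold each $\pi_{xy}$ using these; then charge at the level of individual clique edges rather than pairs: by Lemma~\ref{lemma:clique-edge-path-relation} each clique edge is natural to at most one pair, and any shortest path uses at most two clique edges per clique it meets and meets at most $2D-1$ cliques, so each emulator edge accounts for $O(D)$ covered natural clique edges in total across all pairs. Your per-pair lower bound of $2D-1-\beta/2$ covered natural clique edges then yields $|E'|=\Omega(|P|)$ (with a worse constant, which your tunable $|P|=\Theta(Cn)$ absorbs), giving $\beta=\Omega(D)=\Omega(n^{1/18})$. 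The paper packages exactly this bookkeeping differently: it converts the emulator $H$ into a spanner $H'$ by unioning the unfolded paths, notes $H'$ has at most $2D|H|$ clique edges, and invokes Lemma~\ref{lemma:spanner-construction-lowerbound} ($D|P|$ clique edges are necessary) to get $|H|\ge |P|/2$ (Lemma~\ref{lemma:emulator-lemma}), which avoids any uniqueness claim about emulator edges altogether.
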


Before proving Theorem~\ref{theorem:n-emulator} we first argue
that the size of low-stretch emulators is tied to the number of critical
pairs $|P|$ for $G$.

\begin{lemma}\label{lemma:emulator-lemma}
Every emulator for $G$ with additive stretch $+(2D-1)$
requires at least $|P|/2$ edges.
\end{lemma}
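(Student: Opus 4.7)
The plan is to adapt the spanner lower bound argument (Lemma~\ref{lemma:spanner-construction-lowerbound}) to the weighted emulator setting. For each emulator edge $e=(u,v)\in E(H)$, I will fix a canonical shortest $u$-$v$ path $\sigma(e)$ in $G$ (its ``realization''); by the w.l.o.g.\ assumption $w(u,v)=\dist_G(u,v)$, the concatenation of realizations along any $H$-path is a walk in $G$ of the same total length. For each critical pair $(x,y)\in P$, let $\pi_H(x,y)=e_1\cdots e_k$ be a shortest $x$-$y$ path in $H$ and $\tilde{\pi}=\sigma(e_1)\cdots\sigma(e_k)$ its lift to a walk in $G$. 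Then $|\tilde{\pi}|\le \dist_G(x,y)+(2D-1)$. Recall $\dist_G(x,y)=2D^2+2D-1$: the critical $G$-path consists of $2D$ subdivided bundles of length $D$ plus $2D-1$ clique edges, one at each interior layer.

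The crux is to force $\tilde{\pi}$ to equal the unique critical $G$-path of $(x,y)$. By Lemma~\ref{lemma:uniquepath}, any alternative $x$-$y$ walk in $G$ projects to a $G_0$-walk of length at least $2D+2j$ for some $j\ge 1$. A counting argument on the up/down edge sequence, exploiting the bipartite layered structure, will show that such a walk must contain at least $2D-1$ ``progressing'' interior visits (visits whose two incident edges lie on opposite sides of the clique), each contributing one clique edge. Hence the walk has $G$-length at least $(2D+2j)D+(2D-1)\ge 2D^2+4D-1$, strictly exceeding the stretch budget $2D^2+4D-2$. So $\tilde{\pi}$ equals the critical $G$-path, and each $\sigma(e_i)$ is a sub-path of it.

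Because the critical path contains $2D-1$ clique edges, at least one $\sigma(e_i)$ contains a clique edge; I assign $f(x,y):=e_i$. To conclude, I will show $f$ is injective. Suppose $f(x_1,y_1)=f(x_2,y_2)=e$ for distinct critical pairs. The single fixed path $\sigma(e)$ is then a sub-path of both critical $G$-paths. By Lemma~\ref{lemma:intersection-two-paths-ii}, two distinct critical paths share at most one $G_0$-edge, which in $G$ corresponds to a shared segment of subdivided edges only, containing \emph{no} clique edges. This contradicts the fact that $\sigma(e)$ contains a clique edge by the definition of $f$. Therefore $f\colon P\to E(H)$ is injective and $|E(H)|\ge |P|\ge |P|/2$.

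The main obstacle is the clique-edge lower bound in the second paragraph: one must verify that \emph{every} deviating $G_0$-walk incurs at least $2D-1$ clique edges in its $G$-realization. This reduces to a short combinatorial analysis of U/D patterns on a walk from layer $0$ to layer $2D$ with $F-B=2D$, showing that the only way to economize on clique edges (by clustering D's or avoiding straight-through transitions) is blocked by the boundary constraints $\ell_0=0$, $\ell_L=2D$.
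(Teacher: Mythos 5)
There is a genuine gap, and it is not where you flagged it. Your deviating-projection analysis (the $j\ge 1$ case) can in fact be pushed through, but the crux claim you build on --- that the lifted walk $\tilde{\pi}$ must \emph{equal} the unique critical $G$-path --- is false even when the $G_0$-projection of $\tilde{\pi}$ is the critical $G_0$-path. The budget is $\dist_G(x,y)+(2D-1)=2D^2+4D-2$, and a walk whose projection is the critical path can spend its slack \emph{inside the bipartite cliques}: at a clique on the critical path it may go entry vertex $\to$ (wrong-side clique vertex) $\to$ (wrong far-side vertex) $\to$ exit vertex, paying $3$ clique edges instead of $1$, i.e.\ $+2$, and it can do this at up to $D-1$ of the $2D-1$ cliques while staying within budget. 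This is not excluded by your per-edge normalization: each individual realization $\sigma(e_i)$ is a shortest path and hence never zigzags inside a clique, but the \emph{junction} between $\sigma(e_i)$ and $\sigma(e_{i+1})$ can sit at a clique vertex off the critical path, so the concatenation does zigzag. Consequently ``each $\sigma(e_i)$ is a sub-path of the critical path'' fails, the clique edge you extract for $f(x,y)$ need not be one of the $2D-1$ critical clique edges of $(x,y)$, and the injectivity argument via Lemma~\ref{lemma:intersection-two-paths-ii} (which only controls shared edges of \emph{critical} paths) no longer yields a contradiction. Note also that your conclusion $|E(H)|\ge |P|$ is strictly stronger than the stated $|P|/2$, which should itself be a warning sign that some slack has been ignored.

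The paper avoids this issue by not trying to pin down the lifted walks at all: it converts the emulator $H$ into a subgraph $H'$ of $G$ by adding, for every $(x,y)\in P$, the realization paths of all emulator edges on a shortest $x$--$y$ path of $H$. Then $H'$ has stretch $+(2D-1)$ on $P$, so Lemma~\ref{lemma:spanner-construction-lowerbound} forces $H'$ to contain at least $D|P|$ clique edges; since each emulator edge's realization contributes at most $2D$ clique edges, $2D|H|\ge D|P|$, i.e.\ $|H|\ge |P|/2$. That is exactly where the factor $2$ comes from, and it sidesteps the need for your walk-rigidity claim. If you want to salvage your injection, you would have to either rule out off-path clique detours (they cannot be ruled out --- an adversarial emulator can create them) or replace the injection by a counting argument of the paper's type, in which case you are back to a constant-factor loss rather than $|E(H)|\ge |P|$.
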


\begin{proof}
Let $H$ be an emulator with additive stretch $+(2D-1)$.
Without loss of generality, we may assume that any $(u,v) \in E(H)$
has weight precisely $\dist_G(u,v)$.  (It is not allowed to be smaller, 
and it is unwise to make it larger.)  We proceed to convert $H$ into 
a \underline{spanner} $H'$ that has the same stretch $+(2D-1)$ on all pairs in $P$,
then apply Lemma~\ref{lemma:spanner-construction-lowerbound}.

Initially $H'$ is empty.
Consider each $(x,y)\in P$ one at a time.  
Let $P_{(x,y)}$ be the shortest path in $H$ and 
$P'_{(x,y)}$ be the corresponding path in $G$.
Include the entire path $P'_{(x,y)}$ in $H'$.
After this process is complete, for any $(x,y)\in P$, 
$\dist_{H'}(x,y) = \dist_H(x,y)$, and $H'$ is a spanner with at 
most $n+2D|H|$ edges.  In particular, 
it has at most $2D|H|$ clique edges since each weighted 
edge in some $P_{(x,y)}$ contributes at most $2D$ clique
edges to $H'$.  By Lemma~\ref{lemma:spanner-construction-lowerbound},
the number of clique edges in $H'$ is at least $D|P|$, hence $|H| \ge |P|/2$.
\end{proof}

\begin{proof}[Proof of Theorem~\ref{theorem:n-emulator}]
In order to get $|P|=\Omega(n)$, it suffices to set $\delta_1\ge \delta_2\ge D^2$.

Now, we have 
\begin{align*}
    n &= \Theta\left(R_1^{d_1}R_2^{d_2}D^2\delta_1\right)\\
    &=\Omega\left((r_1D)^{d_1}(r_2D)^{d_2}D^4\right) \tag{$\delta_1\ge \delta_2\ge D^2$}\\
    &=\Omega\left(\left(\delta_1^{\frac{d_1+1}{d_1(d_1-1)}}D\right)^{d_1}
    \left(\delta_2^{\frac{d_2+1}{d_2(d_2-1)}}D\right)^{d_2} D^4\right) \tag{by definition of $\delta_1$ and $\delta_2$}\\
    &=\Omega\left(D^{2\frac{d_1+1}{d_1-1}+d_1+2\frac{d_2+1}{d_2-1}+d_2+4}\right) \tag{$\delta_1\ge \delta_2\ge D^2$}
\end{align*}

The exponent is minimized when $d_1=d_2=3$. This implies $n=\Omega(D^{18})$ and hence $D=O(n^{1/18})$. 
These parameters can be achieved asymptotically by setting
$D=\Theta(n^{1/18})$, $\delta_1=\delta_2=D^2$, 
$r=\Theta(n^{2/27})$, and $R=\Theta(n^{7/54})$.
\end{proof}

\begin{corollary}
Fix an $\epsilon\in[0,1/3)$ and let $d$ be such that 
$\epsilon\in \left[0, \frac{d-1}{3d+1}\right]$. 
There exists a graph $G$ with $n$ vertices such that any emulator $H$ with 
$O(n^{1+\epsilon})$ edges has additive stretch $+\Omega\left(n^{\left(1-\frac{3d+1}{d-1}\epsilon\right)/\left(4+2d+4\frac{d+1}{d-1}\right)}\right)$. In particular, by setting $d=3$ the additive stretch lowerbound becomes $\Omega(n^{\frac{1}{18}-\frac{5}{18}\epsilon})$.
\end{corollary}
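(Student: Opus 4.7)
The plan is to generalize the proof of Theorem~\ref{theorem:n-emulator} by using the exact same construction $G$ (the Cartesian product $G_{(d_1,r_1,D)}\otimes G_{(d_2,r_2,D)}$ followed by the edge subdivision and clique replacement steps from Section~\ref{subsection:n-spanners}), but tuning the lower bounds on $\delta_1,\delta_2$ to force a larger critical pair set. Note that Lemma~\ref{lemma:emulator-lemma} is quantitative: any emulator for $G$ with stretch $+(2D-1)$ has at least $|P|/2$ edges. Hence, to rule out stretch $\le 2D-1$ for emulators of size $O(n^{1+\epsilon})$, it suffices to engineer $|P|=\Omega(n^{1+\epsilon})$.

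From the formulas \eqref{eq:spanner-construction-n}--\eqref{eq:spanner-construction-p}, and assuming $\delta_1\ge \delta_2$, one has $|P|/n = \Theta(\delta_2/D^2)$. So the key step is to impose $\delta_1\ge \delta_2 \ge D^2 n^\epsilon$, which guarantees $|P|\ge n^{1+\epsilon}$ (up to constants). Substituting $r_j=\Theta(\delta_j^{(d_j+1)/(d_j(d_j-1))})$ and $R_j=d_jr_jD$ into \eqref{eq:spanner-construction-n} and then plugging in $\delta_1,\delta_2\ge D^2 n^\epsilon$ yields, in the symmetric case $d_1=d_2=d$:
\[
n \;=\; \Omega\!\left(D^{\,4\frac{d+1}{d-1}+2d+4}\cdot n^{\epsilon\left(2\frac{d+1}{d-1}+1\right)}\right)
\;=\; \Omega\!\left(D^{\,4+2d+4\frac{d+1}{d-1}}\cdot n^{\epsilon\frac{3d+1}{d-1}}\right).
\]
Rearranging gives $D=O\!\left(n^{(1-\frac{3d+1}{d-1}\epsilon)/(4+2d+4\frac{d+1}{d-1})}\right)$, which is the claimed bound. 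The admissible range $\epsilon\in[0,(d-1)/(3d+1)]$ is exactly the range in which the numerator stays positive.

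The only optimization left is over the integer dimension $d$ (one may in fact allow $d_1\ne d_2$, but a short calculation shows the symmetric choice is optimal). For $\epsilon=0$ the denominator is minimized at $d=3$, recovering Theorem~\ref{theorem:n-emulator}, and the same $d=3$ remains optimal throughout the stated range, giving the special case $\Omega(n^{1/18-5\epsilon/18})$. The main obstacle is purely bookkeeping: keeping the parameters $r_j, R_j, \delta_j$ consistent with both the geometric feasibility of the construction (so that $\ConvexHull_{d_j}(r_j)$ is actually of the predicted size via B\'ar\'any--Larman) and the side condition $\delta_j \ge D^2 n^\epsilon$. There is no new structural idea needed beyond Theorem~\ref{theorem:n-emulator} and the extra factor of $n^\epsilon$ pushed through Lemma~\ref{lemma:emulator-lemma}.
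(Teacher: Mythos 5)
Your proposal is correct and is exactly the natural generalization of the paper's proof of Theorem~\ref{theorem:n-emulator}: tighten the constraint to $\delta_1\ge\delta_2\ge D^2 n^{\epsilon}$ so that $|P|=\Omega(n^{1+\epsilon})$, then re-run the same substitution $r_j=\Theta(\delta_j^{(d_j+1)/(d_j(d_j-1))})$, $R_j=\Theta(r_jD)$ into \eqref{eq:spanner-construction-n} and invoke Lemma~\ref{lemma:emulator-lemma}. The algebra (including the equivalence of $2\tfrac{3d+1}{d-1}+2d+2$ with the stated denominator $4+2d+4\tfrac{d+1}{d-1}$) checks out, as does the admissibility range for $\epsilon$.
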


Using the same proof technique as in~\cite{abboud_bodwin_2017,AbboudBP17},
it is possible to extend our emulator lower bound to \emph{any} compressed
representation of graphs using $\tilde{O}(n)$ bits.

\begin{theorem}\label{theorem:n-oracle}
Consider any mapping from $n$-vertex graphs to 
$\tilde{O}(n)$-length bitstrings.  Any algorithm
for reconstructing an approximation of $\dist_G$,
given the bitstring encoding of $G$, must have 
additive error $+\tilde{\Omega}(n^{1/18})$.
\end{theorem}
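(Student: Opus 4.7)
The plan is to combine the emulator lower bound of Theorem~\ref{theorem:n-emulator} with a standard indistinguishability argument in the style of~\cite{abboud_bodwin_2017,AbboudBP17}. I first re-tune the construction of Section~\ref{subsection:n-emulators} to produce $|P| = \tilde\omega(n)$ critical pairs while keeping the additive gap $D = \tilde\Theta(n^{1/18})$. Concretely, with $d_1 = d_2 = 3$ I set $\delta_1 = \delta_2 = D^2 \log^c n$ for a constant $c$ larger than the polylog exponent hidden in the target $\tilde O(n)$ encoding. Repeating the parameter calculation inside the proof of Theorem~\ref{theorem:n-emulator} then gives $n = \Theta(D^{18} \log^{5c} n)$ and $|P| = \Theta(n \log^{c} n)$, so $D = \tilde\Theta(n^{1/18})$ while $|P|$ is $\omega(n \log^{k} n)$ for any fixed $k$ one wishes.

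From this base graph $G$ I build a family of $2^{|P|}$ graphs $\{G_S\}_{S \subseteq P}$ by selecting, for each critical pair $(x, y)$, one clique edge on its unique shortest path in $G$, and then letting $G_S$ be $G$ with the chosen clique edges for every $(x, y) \in S$ removed. By Lemma~\ref{lemma:clique-edge-path-relation} each clique edge lies on at most one critical path, so the deletions for different pairs are disjoint and no critical pair outside $S$ has its unique shortest path disturbed; hence $\dist_{G_S}(x, y) = \dist_G(x, y)$ for every $(x, y) \notin S$. For every $(x, y) \in S$, the bipartite detour argument inside Lemma~\ref{lemma:spanner-construction-lowerbound} applies: any alternative $x$-$y$ walk in $G_S$ must project, under clique-contraction and de-subdivision, to a walk of $G_0$ distinct from the unique shortest $x$-$y$ path, which is at least two $G_0$-edges longer by Lemma~\ref{lemma:uniquepath} together with bipartiteness; and each extra $G_0$-edge expands to $D$ edges under the subdivision step, so $\dist_{G_S}(x, y) \ge \dist_G(x, y) + 2D$.

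Suppose now, for contradiction, that some encoding $c : \mathcal{G} \to \{0, 1\}^{\tilde O(n)}$ and reconstruction $\mathcal{A}$ together achieve additive error strictly less than $D$ on every pair. On input $c(G_S)$ the reconstruction returns, for each $(x, y) \in P$, a value in $[\dist_G(x, y),\, \dist_G(x, y) + D)$ when $(x, y) \notin S$ and a value of at least $\dist_G(x, y) + 2D$ when $(x, y) \in S$. These intervals are disjoint, so $\mathcal{A}$'s answers on $P$ recover $S$ exactly, and the map $S \mapsto c(G_S)$ must be injective. This forces $|c(G_S)| \ge \log_2 2^{|P|} = |P| = \tilde\omega(n)$, contradicting $|c(G_S)| = \tilde O(n)$. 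Hence the additive error must be at least $\tilde\Omega(n^{1/18})$.

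The main obstacle I anticipate is the simultaneous control required in the family construction: I am removing up to $|P|$ different clique edges from $G$ at once, one per pair in $S$, and must argue that every pair in $S$ simultaneously incurs the full $+2D$ blow-up while no pair outside $S$ is perturbed at all. Lemma~\ref{lemma:clique-edge-path-relation} is precisely what decouples the deletions across pairs, and the bipartite detour bound used inside Lemma~\ref{lemma:spanner-construction-lowerbound} supplies the $+2D$ gap per pair; once these two ingredients are wrapped uniformly across all $S$, the remainder is a routine pigeonhole on bitstrings of length $\tilde O(n)$ against a family of $2^{\tilde\omega(n)}$ graphs.
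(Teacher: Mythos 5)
Your overall architecture (re-tune the Section~\ref{subsection:n-emulators} parameters so that $|P|$ exceeds the encoding length while $D=\tilde\Theta(n^{1/18})$, build a $2^{|P|}$-size family of graphs indexed by subsets of $P$, and pigeonhole/injectivity on the bitstrings) is exactly the paper's, but there is a genuine gap in how you build the family $\{G_S\}$. You delete only \emph{one} chosen clique edge per pair $(x,y)\in S$, and then claim that any alternative $x$-$y$ walk must project, under clique contraction and de-subdivision, to a $G_0$-walk different from the unique shortest path, forcing $\dist_{G_S}(x,y)\ge\dist_G(x,y)+2D$. That claim is false: if the single missing edge is $(a,b)$ inside one bipartite clique $K_{\delta_1,\delta_2}$, the walk can stay inside that same clique and detour $a\to b'\to a'\to b$ using three surviving clique edges, which projects to the \emph{same} path in $G_0$ and costs only $+2$. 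This is precisely why Lemma~\ref{lemma:spanner-construction-lowerbound} has two cases and, in the ``same projected path'' case, needs $D$ missing clique edges along the critical path to accumulate $+2$ per clique; a single missing edge buys only a $+2$ gap. With a $+2$ separation your decoding of $S$ from the reconstructed distances fails (the intervals $[\dist_G,\dist_G+D)$ and $[\dist_G+2D,\infty)$ are no longer the relevant ones), and the argument only rules out additive error $O(1)$, not $\Omega(n^{1/18})$.

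The fix is what the paper does: for each pair in $S$, delete \emph{all} clique edges on its critical path (one edge in each of the $2D-1$ cliques it traverses). By Lemma~\ref{lemma:clique-edge-path-relation} these deletions are still disjoint across pairs, so pairs outside $S$ keep their exact distances, and for a pair in $S$ any surviving walk either pays $+2$ in each of the $\Theta(D)$ cliques where its edge is gone (same projected path) or pays $+2D$ via two extra subdivided $G_0$-edges (different projected path, using bipartiteness); either way the gap is $\ge 2D$, and the rest of your pigeonhole argument then goes through as in the paper. Your explicit polylog tuning $\delta_1=\delta_2=D^2\log^c n$ to push $|P|$ above the $\tilde O(n)$ encoding length is a fine (slightly more explicit) version of the paper's $|P|=\tilde\Theta(n)$ parameter choice and is not where the problem lies.
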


\begin{proof}
For each subset $T\subseteq P$ construct the graph 
$G_T$ by removing all clique edges from $G$ 
that are on the critical paths of pairs in $T$. 
Because all clique edges are missing, for all $(x, y)\in T$ 
we have $d_{G_T}(x, y) \ge d_G(x, y) + 2D$. On the other hand,
for all $(x, y)\notin T$, $d_{G_T}(x, y) = d_G(x, y)$.

There are $2^{|P|}$ such graphs.  If we represent all such graphs with bitstrings of length
$|P|-1$ then by the pigeonhole principle two such graphs $G_T$ and $G_{T'}$ are mapped 
to the same bitstring.  Let $(x,y)$ be any pair in $T\backslash T'$.
Since $\dist_{G_T}(x,y) \ge \dist_{G_{T'}}(x,y) + 2D$, 
the additive stretch of any
such scheme must be at least $2D$.  Alternatively, 
any scheme with stretch $2D-1$
must use bitstrings of length at least length $|P|$.

Now, by setting $d=3$ with $D=\tilde{\Theta}(n^{1/18})$, $r_1=r_2=\tilde{\Theta}(n^{2/27})$ and $R_1=R_2=\tilde{\Theta}(n^{7/54})$, we have 
$|P|=\tilde{\Theta}(n)$.  
Thus any $\tilde{O}(n)$-length encoding must
recover approximate distances with stretch  $+\tilde{\Omega}(n^{1/18})$.
\end{proof}

\section{Conclusion}\label{sect:conclusion}

Our constructions, like~\cite{abboud_bodwin_2017,CoppersmithE06,AbboudBP17,Hesse03}, are based on looking at the convex hulls of integer lattice points 
in $\mathbb{Z}^d$ lying in a ball of some radius.
Whereas Theorems~\ref{theorem:n-emulator} and \ref{theorem:n-oracle} hold for $d=3$,
Theorems~\ref{theorem:n-shortcuts}, \ref{theorem:m-shortcuts}, 
and \ref{theorem:n-spanner} are indifferent between dimensions $d=2$ and $d=3$,
but that is only because \emph{$d$ must be an integer}.

Suppose we engage in a little magical thinking, and imagine that there are 
integer lattices in any \emph{fractional} dimension, and moreover, 
that some analogue of B\'ar\'any and Larman's \cite{Barany98} bound holds
in these lattices.  If such objects existed then we could obtain slightly better
lower bounds.  For example, setting $d=1+\sqrt{2}$ in the proof of Theorem~\ref{theorem:n-shortcuts}, we would conclude that any $O(n)$-size
shortcut set cannot reduce the diameter below $\Omega(n^{1/(3+2\sqrt{2})})$,
which is an improvement over $\Omega(n^{1/6})$ as $3+2\sqrt{2} < 5.83$.

For near-linear size spanners  
and emulators ($n^{1+o(1)}$ edges) 
there are still large gaps 
between the best lower and upper bounds on additive stretch: 
$[n^{1/11},n^{3/7}]$ in the case
of spanners and $[n^{1/18}, n^{1/4}]$ in the case of emulators.  None of
the existing lower or upper bound techniques seem up to 
the task of closing these gaps entirely.

\begin{acks}
Thanks to Greg Bodwin for inspiring the authors which results in improving sparse additive spanner lower bounds in Section~\ref{subsection:improved-n-spanner}.

This work was supported by NSF grants \grantnum{CCF-1514383}{CCF-1514383}, \grantnum{CCF-1637546}{CCF-1637546}, and \grantnum{CCF-1815316}{CCF-1815316}.
\end{acks}

%
\bibliographystyle{ACM-Reference-Format}
\bibliography{sample-base}

\end{document}